\def\BibTeX{{\rm B\kern-.05em{\sc i\kern-.025em b}\kern-.08em
    T\kern-.1667em\lower.7ex\hbox{E}\kern-.125emX}}
\newcommand{\xdashrightarrow}[2][]{\ext@arrow 0359\rightarrowfill@@{#1}{#2}}
\newcommand{\xdashleftarrow}[2][]{\ext@arrow 3095\leftarrowfill@@{#1}{#2}}
\newcommand{\xdashleftrightarrow}[2][]{\ext@arrow 3359\leftrightarrowfill@@{#1}{#2}}
\def\rightarrowfill@@{\arrowfill@@\relax\relbar\rightarrow}
\def\leftarrowfill@@{\arrowfill@@\leftarrow\relbar\relax}
\def\leftrightarrowfill@@{\arrowfill@@\leftarrow\relbar\rightarrow}
\def\arrowfill@@#1#2#3#4{%
  $\m@th\thickmuskip0mu\medmuskip\thickmuskip\thinmuskip\thickmuskip
   \relax#4#1
   \xleaders\hbox{$#4#2$}\hfill
   #3$%
}
\newif\ifcomments
\newcommand{\del}[1]{\ifcomments\hl{ }\st{#1}\hl{ }\fi}
\newcommand{\new}[1]{\sethlcolor{yellow}\hl{#1}}
\newcommand{\maxi}[1]{\ifcomments\sethlcolor{green}\hl{\textsc{Maxi}: #1}\fi}
\newcommand{\fred}[1]{\ifcomments\sethlcolor{cyan}\hl{\textsc{Fred}: #1}\fi}
\newcommand{\ross}[1]{\ifcomments\sethlcolor{pink}\hl{\textsc{Ross}: #1}\fi}
\newtheorem{definition}{Definition}
\newtheorem{theorem}{Theorem}
\newtheorem{corollary}{Corollary}
\newtheorem{lemma}{Lemma}
\newcommand{\flow}{\leadsto}
\newcommand{\defs}{\triangleq}
\newcommand{\lts}[1]{\xsta{#1}}
\newcommand{\xsta}[1]{ 
\setbox0=\hbox{\,${\scriptstyle#1}$\,}
\ifdim\wd0<12pt\wd0=12pt\fi
\mathrel{\raise2.2pt\hbox{$\mathop{\rule{\wd0}{0.6pt}}\limits^{#1}$}\mkern-7mu\blacktriangleright}
}
\newcommand{\by}[1]{{\footnotesize [by #1]}}
\newcommand{\Lts}[1]{ 
\setbox0=\hbox{\,${\scriptstyle#1}$\,}
\ifdim\wd0<12pt\wd0=12pt\fi
\mathrel{\raise1.6pt\hbox{$\mathop{\mathrlap{\rule{\wd0}{0.6pt}}{\rule[1.4pt]{\wd0}{0.6pt}}}\limits^{#1}$}\mkern-7mu\blacktriangleright}
}
\newcommand{\Read}[2]{\mathrm{R}(#1,#2)}
\newcommand{\Write}[2]{\mathrm{W}(#1,#2)}
\newcommand{\ds}[1]{\mathrm{ds}(#1)}
\newcommand{\coi}[1]{\mathrm{coi}(#1)}
\newcommand{\CoIC}{\mathcal{C}}
\newcommand{\setlog}{$\{log\}$\xspace}
\newcommand{\parameters}{\mathsf{parameters}}
\newcommand{\variables}{\mathsf{variables}}
\newcommand{\axiomsl}{\mathsf{axiom}}
\newcommand{\invariant}{\mathsf{invariant}}
\newcommand{\initial}{\mathsf{initial}}
\newcommand{\operation}{\mathsf{operation}}
\newcommand{\Theorem}{\mathsf{theorem}}
\newcommand{\Yo}{\mathit{Yo}}
\newcommand{\Xo}{\mathit{Xo}}
\newcommand{\N}{N}
\newcommand{\W}{W}
\newcommand{\RR}{\mathit{Sds}}
\newcommand{\siRR}{\mathit{minSub}}
\newcommand{\Pfun}{\mathsf{pfun}}
\newcommand{\Dom}{\mathsf{dom}}
\newcommand{\Ran}{\mathsf{ran}}
\newcommand{\ApplyTo}{\mathsf{applyTo}}
\newcommand{\Oplus}{\mathsf{oplus}}
\newcommand{\Diff}{\mathsf{diff}}
\newcommand{\Foreach}{\forall}
\newcommand{\simpSec}{\mathit{simpSec}}
\newcommand{\starProp}{\mathit{starProp}}
\newcommand{\strRead}{\mathit{spRead}}
\newcommand{\wRead}{\mathit{wkRead}}
\newcommand{\readWrite}{\mathit{readWrite}}
\newcommand{\rRead}{\mathit{rvkRead}}
\newcommand{\rqWrite}{\mathit{write}}
\newcommand{\init}{\mathit{init}}
\DeclareMathSymbol{\dres}{\mathbin}{letters}{"2F}
\begin{document}

\title{Brewer-Nash Scrutinised: Mechanised Checking of Policies featuring Write Revocation
}

\makeatletter
\newcommand{\linebreakand}{%
  \end{@IEEEauthorhalign}
  \hfill\mbox{}\par
  \mbox{}\hfill\begin{@IEEEauthorhalign}
}
\makeatother

\author{\IEEEauthorblockN{Alfredo Capozucca}
\IEEEauthorblockA{\textit{Department of Computer Science}\\
\textit{University of Luxembourg}\\
Esch-sur-Alzette, Luxembourg \\
alfredo.capozucca@uni.lu \\}
\and
\IEEEauthorblockN{Maximiliano Cristi{\'a}}
\IEEEauthorblockA{\textit{Universidad Nacional de Rosario}\\
\textit{CIFASIS} \\
Rosario, Argentina \\
cristia@cifasis-conicet.gov.ar \\}
\linebreakand 
\IEEEauthorblockN{Ross Horne}
\IEEEauthorblockA{\textit{Computer and Information Sciences} \\
\textit{University of Strathclyde}\\
Glasgow, United Kingdom \\
ross.horne@strath.ac.uk \\}
\and
\IEEEauthorblockN{Ricardo Katz}
\IEEEauthorblockA{\textit{CIFASIS CONICET} \\
Rosario, Argentina \\
katz@cifasis-conicet.gov.ar \\}}

\maketitle

\begin{abstract}
This paper revisits the Brewer-Nash security policy model inspired by ethical Chinese Wall policies. We draw attention to the fact that write access can be revoked in the Brewer-Nash model. The semantics of write access were underspecified originally, leading to multiple interpretations for which we provide a modern operational semantics. We go on to modernise the analysis of information flow in the Brewer-Nash model, by adopting a more precise definition adapted from Kessler. For our modernised reformulation, we provide full mechanised coverage for all theorems proposed by Brewer \& Nash. Most theorems are established automatically using the tool \setlog with the exception of a theorem regarding information flow, which combines a lemma in \setlog with a theorem mechanised in Coq. Having covered all theorems originally posed by Brewer-Nash, achieving modern precision and mechanisation, we propose this work as a step towards a methodology for automated checking of more complex security policy models.
\end{abstract}

\begin{IEEEkeywords}
security policies, information flow, confidentiality, revocation, set theory, automated verification
\end{IEEEkeywords}

\section{Introduction}

The Brewer-Nash security policy model, inspired by Chinese Wall policies, used to manage conflicts of interest particularly in the financial sector,
was originally communicated at S\&P'89~\cite{Brewer1989}.
Chinese Walls remain as much a feature of modern businesses, as when Brewer \& Nash motivated their work,
with the ongoing high-profile insider-trading case of Joe Lewis highlighting the importance of being able to provide evidence that adequate policies were adhered to.\footnote{See, e.g., FT on insider trading: \url{https://www.ft.com/insider-trading}}
In computer systems, Chinese Walls have been adopted since they allow freedom of choice initially, until too much information is requested.
There are established implementations of Chinese Wall policies for Unix~\cite{Foley1997},
and the progression of such policies to the Xen hypervisor, where multiple organisations may share the same hardware, was almost inevitable~\cite{Sailer2005}.

Brewer \& Nash deliberately designed their security policy model
such that features are reminiscent of the Bell-LaPadula security policy model~\cite{Bell1973}, that informed most lattice-based security policies.
These policy models are schemes for
security policies that maintain confidentiality (and sometimes integrity) of information by permitting or denying certain flows through a system.
While Bell-LaPadula, which originated in policies typical of the military, would permit flows from low to high classification of objects,
Brewer \& Nash proposed a flat structure, more typical of companies that do not have a common administrative authority.

A distinctive feature of Chinese Wall policies, sometimes referred to as ethical policies,
is that they provide some mechanism for explicitly indicating conflicts of interest at some granularity such as a dataset containing information about a specific company or legal entity.
A conflict of interest (CoI) can become problematic, for example, when there are nondisclosure agreements in place and
a single consultant can read confidential information about competitors for which consultancy services are offered.
CoIs remain a feature of more recent models of ethical policies,
such as quantales of information~\cite{Hunt2021}.
The year Brewer \& Nash communicated their model,
Lin presented a compelling argument that conflicts of interest should be represented by a general relation between datasets,
since, a conflict-of-interest relation need not be transitive (for example, if two readers have a conflict of interest with an author then the two readers are not necessarily in conflict with each other)~\cite{Lin1989}.
That idea, due to Lin, is now accepted in the mainstream, e.g., the aforementioned Unix implementation features a matrix capturing a CoI relation which need not be transitive.

Brewer \& Nash were likely aware that CoI is naturally more general in the sense of Lin, since they state explicitly that
``since we wish to compare it with the
Bell-LaPadula (BLP) model we will adopt the latter's
concepts of subjects, objects and security labels.''
Thus some of the restrictions in their model were 
more so to facilitate an easy comparison with concepts due in the Bell-LaPadula model.
There are other dimensions in which the Brewer-Nash model has 
evolved over time, becoming increasingly complex---e.g., permitting more flexible policies, or distinguishing between permission to access and instances of access, etc.~\cite{Kessler1992,Sharifi2013}.
The original model, due to Brewer \& Nash, however still survives in 
 information security textbooks~\cite{Anderson2020}.

In this work, we return our attention to 1989 and the original model of Brewer \& Nash.
While the model appears to be simple, there are some features that are not easy to grasp,
or are easily missed, since they
are handled in a rather implicit manner.
\begin{enumerate}
\item
Firstly, the notion of information flow that they rely on is not, in our view, as well defined as in later papers on security policy models.
\item
Secondly, the model has a rather novel feature for a security policy model: write access can be revoked, but in a rather implicit manner.
\end{enumerate}
Point (1) above is an indicator that, while an appendix with proofs was provided,
the rigour 
of the proofs conducted was perhaps not up to today's standards.
This argument applies whether or not one agrees with our view above on information flow,
thanks to the advances in automated tools and proof assistants.
Since the automated tool for proving decidable theorems in set theory, called \setlog (pronounced set log),
has been used to fully and quickly automate the checking of the Bell-LaPadula policy model~\cite{Cristia2021},
it is natural to ask whether that automation can be lifted to other security policy models in general,
and Brewer-Nash in particular in this work.
What we will see is that Brewer-Nash is more complex to verify, 
and instead we go for a hybrid approach where, 
for the theorem concerning information flow, \setlog proves an invariant,
while Coq is used to mechanise a proof 
confirming
 that the invariant is sufficient to establish the intended property.
Thus, in summary, our contributions in this direction are:
\begin{itemize}
\item Tightening of the definitions given by Brewer \& Nash that we felt necessary to establish their original theorems.
\item A fully mechanised proof of the tightened theorems using \setlog and Coq, with maximum work pushed to the automated \setlog component.
\end{itemize}
The work can also be seen as a seed for a methodology that may be more generally applied to security policy models that maintain confidentiality and integrity with respect to information flows. Given that related policy models are deployed in real systems and their failure can have serious consequences, as discussed at the top of the introduction, it is important to certify the correctness of security policy models.

Returning to point (2) above,
we clarify in a more explicit manner how write access is handled.
In the motivating section, next, we elaborate on
 an example where we ask what happens when 
 you can write to a dataset, and then request (read) access to another dataset.
The example helps us see complications associated with this scenario,
which we believe is the key novelty of the Brewer-Nash model
compared to some more recent ethical policies.
It is also a novelty with respect to key implementations,
for instance the aforementioned Unix implementation 
of a Chinese Wall policy bypasses this feature of Brewer-Nash 
and instead proposes its own mechanism for write access.
Observations such as this, lead us to believe that the Brewer-Nash model is not as simple as it first seems,
and hence may be open to misuse without stronger certification.

\paragraph*{Summary}
Section~\ref{sec:motivation} illustrates the key novelties and ambiguities in the operational semantics of the Brewer-Nash policy model,
firstly in a simpler form ignoring sanitized data.
Section~\ref{sec:bridge} provides a complete operational semantics for Brewer-Nash policies, including sanitized data, and introduces the target properties expected of the Brewer-Nash policy model.
Section~\ref{sec:model} explains how some properties are mechanised using the tool \setlog by expressing appropriate invariants.
Section~\ref{sec:if} defines an appropriate notion of information flow and explains how information flow is mechanised by combining \setlog and Coq.
Section~\ref{sec:min} completes the mechanisation of all theorems by showing how the remaining theorem can be established in \setlog via an explicit construction of an injective function.
Section~\ref{sec:future} highlights how the methodology may be adapted to other policy models in the future.

\section{Motivating scenarios: interpretations of access}\label{sec:motivation}

We explain here a simple scenario in which write revocation occurs, while refreshing our knowledge about the Brewer-Nash policy model.
In doing so, we examine the constraints on state transitions determining whether access is permitted to resources:
namely, the \textit{simple security rule} and the \textit{*-property}.
The scenario illustrates complications in the original Brewer-Nash model
which does not distinguish between read and write access in the state---there is only access.
When one has access, one may read henceforth, yet each write access is conditional on the state.
This statement can already be interpreted in multiple ways and hence we require more precision to resolve how read and write access are interpreted.
We also provide a more explicit formulation of the Brewer-Nash model that separately handles read and write access in the state,
which we argue is more amenable to implementation.
We omit sanitized data from this initial discussion to keep to the point.

\subsection{Diagrammatically}

Consider the simple state transition illustrated in Fig.~\ref{fig:eg:revoke}.
This small example already demonstrates a surprising feature of the Brewer-Nash model, specifically that write permissions can be revoked.
To follow the illustration observe that
there are two conflict-of-interest classes (CoIC) $\czero$ and $\cone$
which set up boundaries between datasets (the rounded regions) indicated by the solid (red) lines.
Intuitively, no subject should be able to hold
data originating from objects in datasets at either side of the red line.
Each dataset has one object in this example: $\ozero$, $\oone$, and $\otwo$.
In a prior state (not shown in the figure), the subject $s_1$ can access no object, and has free will 
to request access from any dataset, which is permitted by Brewer-Nash.
In this case, the subject has chosen to access $\oone$,  
resulting in the state to the left of Fig.~\ref{fig:eg:revoke},
where the two-headed 
arrow 
($\color{modcolor} \xleftrightarrow{\qquad}$) 
indicates that $s_1$ has read/write access to $o_2$.
\begin{figure}[h!]
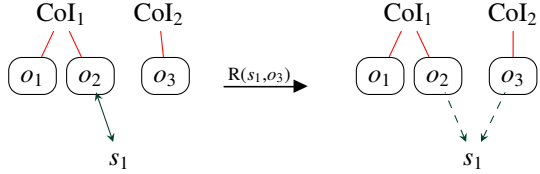

\[
\!\!\!\!\!\!
  \begin{array}{c}
  \begin{array}{c@{}c}    
    \vczero0 & \vcone1
    \\[1em]
  \begin{array}{c@{~}c}
  \vmod0{ ~\vozero0~  } &
  \vmod1{  ~\voone1~ }
  \end{array}
  &
  \begin{array}{c@{}c}
  \vmod2{ ~\votwo2~  }
  \end{array}
  \end{array}
  \\[3em]
  \qquad\vszero0 
  \end{array}
  \Wedges{szero0/oone1}
  \Redges{czero0/mod0,czero0/mod1,cone1/mod2}
\lts{\Read{\szero}{\otwo}}
  \begin{array}{c}
  \begin{array}{c@{}c}    
    \vczero0 & \vcone1
    \\[1em]
  \begin{array}{c@{~}c}
  \vmod0{ ~\vozero0~  } &
  \vmod1{  ~\voone1~ }
  \end{array}
  &
  \begin{array}{c}
  \vmod2{ ~\votwo2~  }
  \end{array}
  \end{array}
  \\[3em]
  \qquad\vszero0 
  \end{array}
  \Nedges{szero0/oone1}
  \Nedges{szero0/otwo2}
  \Redges{czero0/mod0,czero0/mod1,cone1/mod2}
\]
\caption{
A transition where write is revoked:
subject $\szero$ requests access to $\otwo$
(permitted by simple security),
which results in write access to $\oone$
being revoked (due to the *-property).
}\label{fig:eg:revoke}
\end{figure}
When in the state to the left of Fig.~\ref{fig:eg:revoke},
it is impossible for $\szero$ to access $\ozero$. 
This is the distinguishing feature of Chinese Wall policies in general---initially subjects have free choice to access objects,
but, as accesses are granted, permissions may be restricted.
In contrast, subject $\szero$ can request access to $\otwo$, which
lies in a separate CoIC.
Perhaps here, $\ozero$ and $\oone$ are each confidential data items in datasets of competitors, but $\otwo$ is data about a business operating in a separate market (c.f., banks v.s.\ oil companies in the original work of Brewer \& Nash).

Since $\szero$ is permitted to access $\otwo$,
 the system can perform the state transition labelled with the read request in Fig.~\ref{fig:eg:revoke}.
 However, notice that after that transition, the solid two-headed arrow ($\color{modcolor} \xleftrightarrow{\qquad}$) becomes a one way dotted arrow
 ($\color{modcolor} \xdashrightarrow{\qquad}$).
 These dotted arrows point from the object to the subject, indicating that, in the new state,
 (1) the subject has at some point been permitted access, (2) only read access is enabled in that state.
 Thus the subject can obtain information from the object, but not the other way round, as suggested by the direction of these arrows.
 The arrows therefore indicate that the write access of $\oone$ to $\szero$ has been revoked as soon as $\otwo$ is accessed, and furthermore $\szero$ is never permitted write access to $\otwo$.
 This write revocation is specific to a line of work faithful to Brewer-Nash~\cite{Lin1989,Kessler1992,Sharifi2013},
 but not all ethical policies.
 
\subsection{Operationally}
We explain more formally the machinery 
at play here,
to understand in what sense revocation of write access
 is handled implicitly by Brewer \& Nash,
  before we go on to express how write revocation may be made more explicit.

The Brewer-Nash policy model is a scheme for policies, where a specific Brewer-Nash policy is defined by security labels that are assigned to each object for the lifetime of the policy.
More precisely, a Brewer-Nash policy assumes that each object $o$
has fixed labels that assign the object a single CoIC, denoted here by $\coi{o}$,
and a dataset within that CoIC, denoted $\ds{o}$.
States are simply a (finite) relation between subjects and objects (the access matrix), denoted $N$.
We can make a state transition updating $N$ to 
indicate that a subject $s$ can access an object $o$,
only if $o'$ is in the same dataset as $o$ or is in a different CoIC from $o$.
This condition on access is called the \textit{simple security} rule.
\begin{definition}[simple security]\label{def:ss}
A subject $s$, object $o$ and matrix $N$ satisfy the simple security rule
whenever:
\[
\forall o' \colon (s, o') \in N \implies \left( \ds{o'} = \ds{o} \vee \coi{o} \neq \coi{o'} \right)
\]
\end{definition}
Thus read access, conditional on the simple security rule, can be
expressed, using the conventions of labelled transition systems,
as follows.
\begin{equation}\tag{iR}\label{eq:ss}
\begin{prooftree}
\forall o' \colon (s, o') \in N \implies \left( \ds{o'} = \ds{o} \vee \coi{o} \neq \coi{o'} \right)
\justifies
N \lts{\Read{s}{o}} N \cup \left\{ (s, o) \right\}
\end{prooftree}
\end{equation}
Stated otherwise, the above rule expresses that,
$s$ can access $o$ if $s$ has not accessed any other object that is in another dataset in the same CoIC as $o$.
It is not stated clearly or expressed formally in the original paper~\cite{Brewer1989}, but we can assume that 
``access'' here refers to
read access specifically.
The need for that clarification becomes important, given that the next property refers to
read and write access.


The *-property is described informally by Brewer \& Nash in terms of the capability to ``read'' and ``write'', rather than the neutral ``access'' of the simple security rule.
More precisely, it states that write access is permitted if
(1) the simple security rule holds,
and (2) no subject ``can read'' an object in a dataset different from the one requested.
The details of the operational rule for write access is open to interpretation.
We see arguments for and against an interpretation
where writing (disseminating and appending data) is possible before requesting read access using Eq.~\eqref{eq:ss}.

One interpretation, coming from
the model provided by Brewer-Nash in their appendix  
suggests that any access entails read access,
 since the formal way ``$s$ can read $o$'' is modelled
is by checking 
 whether $(s,o) \in N'$, where $N'$ is the state after the transition (see Axiom~6 in the original appendix of Brewer \& Nash~\cite{Brewer1989}).
%
The *-property implies the simple security rule
(hence checking the simple security rule is redundant as noted first by Lin~\cite{Lin1989}).
This leads us to the following labelled transition for write access.
\begin{equation}\tag{iRW}\label{eq:a}
\begin{prooftree}
\forall o' \colon (s, o') \in N \implies \ds{o'} = \ds{o}
\justifies
N \lts{\Write{s}{o}} N \cup \left\{ (s, o) \right\}
\end{prooftree}
\end{equation}
For example, assuming the policy given by the labels in Fig.~\ref{fig:eg:revoke},
transitions
$\left\{(\szero,\oone)\right\} \lts{\Write{\szero}{\oone}} \left\{(\szero,\oone)\right\}$
and 
$\left\{(\szero,\oone)\right\} \lts{\Read{\szero}{\oone}} \left\{(\szero,\oone)\right\}$
can be applied indefinitely at first,
until the transition
$\left\{(\szero,\oone)\right\} \lts{\Read{\szero}{\otwo}} \left\{(\szero,\oone),(\szero,\otwo)\right\}$
occurs.
After that point, transitions
$\left\{(\szero,\oone),(\szero,\otwo)\right\} \lts{\Read{\szero}{\oone}} \left\{(\szero,\oone),(\szero,\otwo)\right\}$
and
$\left\{(\szero,\oone),(\szero,\otwo)\right\} \lts{\Read{\szero}{\otwo}} \left\{(\szero,\oone),(\szero,\otwo)\right\}$
are enabled indefinitely.
Yet, no write transition involving $\szero$ is enabled.
Thus the write access of $\szero$ to $\oone$ is implicitly revoked.

Other authors have produced alternative interpretations for how write access is defined.
For example, in an influential lattice-based formulation of a Chinese Wall policy model~\cite{Sandhu93},
it is clear that Sandhu permits write access anywhere if read access is granted nowhere.
In short, according to Sandhu,
subjects are also labelled,
and those subjects with no read access are labelled with the bottom element in a lattice of security labels,
which is below all dataset labels that are assigned to objects (which, as normal, confine confidential information in objects to their dataset).
Since the *-property is generalised by Sandhu such that write is permitted upwards in a particular lattice,
clearly subjects that have not yet read anything can write anywhere.


We, the authors, even are split on how to interpret the definitions of Brewer \& Nash as operational rules (see their Def.~1 and Axiom~6),
and given there is a split in the literature, we explore both interpretations.
If we argue that granting write access does not automatically grant read access,
we can employ the following rule.\footnote{The rule can be formulated without the lattice machinery of Sandhu, who anyway does not formalise state transitions. Sandhu suggests only informally that privileges of a subject may float up the lattice to give the desired dynamics of the subject. The objective of Sandhu is to explain that Chinese Wall policy models can be cast in the same light as the Bell-Lapadula policy models when it comes to the simple security property and *-property; that is, 
subjects can read below and write above in a suitable lattice structure, depending on labels assigned to subjects as well as to objects.}
\begin{equation}\tag{iW}\label{eq:b}
\begin{prooftree}
\forall o' \colon (s, o') \in N \implies \ds{o'} = \ds{o}
\justifies
N \lts{\Write{s}{o}} N 
\end{prooftree}
\end{equation}
This models a more permissive policy allowing write access without granting read access, as indicated by not updating $N$.
 Thus, one can write freely to all datasets,
until one dataset is read from; at which point, write is (implicitly) revoked to all other datasets
(by the *-property).
Transitions illustrating this sequence of operations are presented in Fig.~\ref{fig:eg:write-only},
where the head of the arrow depicting write-only access points in the opposite direction from read-only access seen previously.
\begin{figure}[h!]
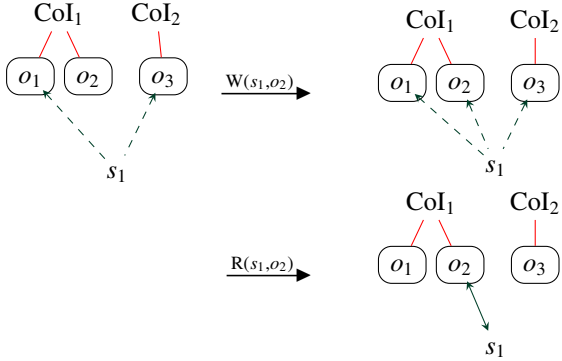

\[
\!\!\!\!\!\!
\begin{array}{rl}
  \begin{array}{c}
  \begin{array}{c@{}c}    
    \vczero0 & \vcone1
    \\[1em]
  \begin{array}{c@{~}c}
  \vmod0{ ~\vozero0~  } &
  \vmod1{  ~\voone1~ }
  \end{array}
  &
  \begin{array}{c@{}c}
  \vmod2{ ~\votwo2~  }
  \end{array}
  \end{array}
  \\[3.5em]
  \qquad\vszero0 
  \end{array}
  \Nedges{ozero0/szero0,otwo2/szero0}
  \Redges{czero0/mod0,czero0/mod1,cone1/mod2}
\lts{\Write{\szero}{\oone}} &
  \begin{array}{c}
  \begin{array}{c@{}c}    
    \vczero0 & \vcone1
    \\[1em]
  \begin{array}{c@{~}c}
  \vmod0{ ~\vozero0~  } &
  \vmod1{  ~\voone1~ }
  \end{array}
  &
  \begin{array}{c}
  \vmod2{ ~\votwo2~  }
  \end{array}
  \end{array}
  \\[3em]
  \qquad\vszero0 
  \end{array}
  \Nedges{ozero0/szero0,oone1/szero0,otwo2/szero0}
  \Redges{czero0/mod0,czero0/mod1,cone1/mod2}
\\ \lts{\Read{\szero}{\oone}} &
    \begin{array}{c}
  \begin{array}{c@{}c}    
    \vczero0 & \vcone1
    \\[1em]
  \begin{array}{c@{~}c}
  \vmod0{ ~\vozero0~  } &
  \vmod1{  ~\voone1~ }
  \end{array}
  &
  \begin{array}{c}
  \vmod2{ ~\votwo2~  }
  \end{array}
  \end{array}
  \\[3em]
  \qquad\vszero0 
  \end{array}
  \Wedges{szero0/oone1}
  \Redges{czero0/mod0,czero0/mod1,cone1/mod2}
\end{array}
\]
\caption{
Transitions in a permissive interpretation of Brewer-Nash policies
where write-only access can be requested anywhere initially.
Notice that write access is revoked when read access is requested.
}\label{fig:eg:write-only}
\end{figure}

Eq.~\eqref{eq:b} allows
each subject to have a phase where they use their own knowledge (to generate reports based on public information, for example)
and push it to any dataset.
This rule appears to be permissible from the perspective of information confidentiality, in the sense
that we expect that secrets held within objects (e.g., confidential information about clients that a trader must respect in the financial sector) will not be leaked to unintended objects by writing freely before reading anything confidential.
When a subject writes without reading, the subject 
can only write information known already before entering the ecosystem---only the subject's knowledge or special data processing skills are given away.
There is therefore no violation of a nondisclosure agreement with a company associated with a dataset,
since only the subject's own information is written, which is not subject to confidentiality constraints.
This keeps Brewer-Nash in line with the confidentiality aims of Bell-LaPadula avoiding inadvertent declassification,
but cannot prevent entirely people
going rogue and, for example, distributing 
 information outside the system (c.f. Teixeira's Pentagon leaks via screenshots posted on Discord).

This write-only (aka. append) phase precedes the phase of read/write editing within one dataset only.
And, finally, there is the read-only phase already discussed, where the information system
assists the subject in ensuring they do not violate a conflict of interests
in their ultimate decision making.
Confidentiality is therefore preserved (formalised in Sec.~\ref{sec:if}).

Both interpretations of write access discussed above have their merits.
Furthermore, we will find that it does not create problems going forward
to have both rules coexisting in one model.
Therefore, either rule, may be selected when implementing a system
enforcing a policy. Indeed, some objects may be governed by 
Eq.~\eqref{eq:a} and other by Eq.~\eqref{eq:b}
within the same system without compromising security.


\subsection{Explicitly}
From the above model we can see that the Brewer-Nash model is assuming that write is an atomic action,
in the sense that each time a subject would like
to write anything the *-property 
must be checked before the write access is granted,
and furthermore we must be sure that the write access completes before 
another operation is applied (or at least before a read request to another dataset is granted, in a system with more advanced awareness of concurrency).
This approach, by Brewer \& Nash to write operations
comes at a cost (in terms of concurrency control and logical checks), due to the need to check the *-property repeatedly while locking certain other operations,
rather than referring to an access control matrix.
That complexity is perhaps among the reasons why policies such as Unix Chinese Wall policies~\cite{Foley1997} do not implement write access using the *-property at all.

This observation leads us naturally to a more explicit approach that we introduce in this work,
which is
to include an additional write access matrix that makes any previously granted write access explicit until the *-property is violated.
Read and write access are formalised via the operational rules, explained next, that refer to a write access matrix $W$.

As explained when discussing Eq.~\eqref{eq:b},
an interpretation of Brewer-Nash
is that write access may be write-only,
and hence $N$ is not updated as in the following rule.
\begin{equation}\tag{xW}\label{eq:explicitW}
\begin{prooftree}
\forall o' \colon (s, o') \in N \implies \ds{o'} = \ds{o}
\justifies
N, W \lts{\Write{s}{o}} N, W \cup \left\{ (s,o) \right\}
\end{prooftree}
\end{equation}

Recall that,
 alternatively, 
a policy may insist that read access is granted whenever write access is granted (c.f., Eq.~\eqref{eq:a}),
which is a legitimate interpretation of the partial definitions of Brewer \& Nash.\footnote{
Derived from the explanations of Brewer \& Nash that: (1) $(s,o) \in N$ means ``$s$ has, or has had, access to object $o$.''
in a passage that can be interpreted as generically describing all types of access rather than read access specifically,
and; (2) there is no formal mention of read, except ``has read'' in the *-property and hence access is the only candidate;
(3) the state $N'$ after a write operation should contain the subject and object to which write access is granted.
}
In this case, observe below that both the 
read access and write matrix are updated,
where in what follows we define $W' = W \setminus \left\{ (s,o') \colon \ds{o'} \neq \ds{o} \right\}$.
\begin{equation}\tag{xRW}\label{eq:explicitRW}
\begin{prooftree}
\begin{array}{c}
\forall o' \colon (s, o') \in N \implies \ds{o'} = \ds{o}
\end{array}
\justifies
N, W \lts{\Write{s}{o}} N \cup \left\{ (s,o) \right\},
W' \cup \left\{ (s,o) \right\}
\end{prooftree}
\end{equation}
The updated write access matrix $W'$ 
explicitly revokes write
access, to any object in another dataset when the above rule is applied.
This caters for the possibility that write access may have been granted to another dataset,
which is clearly possible if Eq.~\eqref{eq:explicitW} is 
allowed to coexist with Eq.~\eqref{eq:explicitRW} above.\footnote{This will also be possible even if Eq.~\eqref{eq:explicitW} were forbidden, once we introduce sanitized data in the next section.
N.B.~``x'' abbreviates ``explicit''.
}


Since, once granted, read access is recorded in $N$ and write access is recorded in $W$,
we need not 
check the *-property each time a read or write occurs,
as in the original Brewer-Nash model.
Instead, we simply consult the access matrices $N$ or $W$ respectively and permit the operation if there is an appropriate entry.
This observation leads us to the following cheap rule for access, while other rules need only be appealed to if the rules below fail to grant access.
\begin{equation}\tag{access matrix}\label{eq:matrix}
\begin{prooftree}
(s,o) \in N
\justifies
N, W \lts{\Read{s}{o}} N, W
\end{prooftree}
\qquad
\begin{prooftree}
(s,o) \in W
\justifies
N, W \lts{\Write{s}{o}} N, W
\end{prooftree}
\end{equation}

The interesting question is
 what happens when read access is requested in another dataset
in a different CoIC from where write has been granted,
as per Fig.~\ref{fig:eg:revoke}.
Neither of the cheap access matrix lookup rules apply.
In this more explicit model,
in order to avoid a violation of the *-property,
it is important also to check that the new
read does result in the *-property being violated for some write that has already been granted.
If it does then either we:
\begin{itemize}
\item deny the read operation, or
\item explicitly revoke all offending write accesses in $W$. 
\end{itemize}
In terms of user experience, indeed it seems appropriate
to ask the user (or run some conflict resolution algorithm),
since it may be that the subject welcomes the warning and decides that they prefer not to read
the object in the new dataset, and instead retain read-write access to their current dataset.

\newtcolorbox{mymathbox}[1][]{colback=white, rounded corners, #1}

\begin{figure*}[h!]
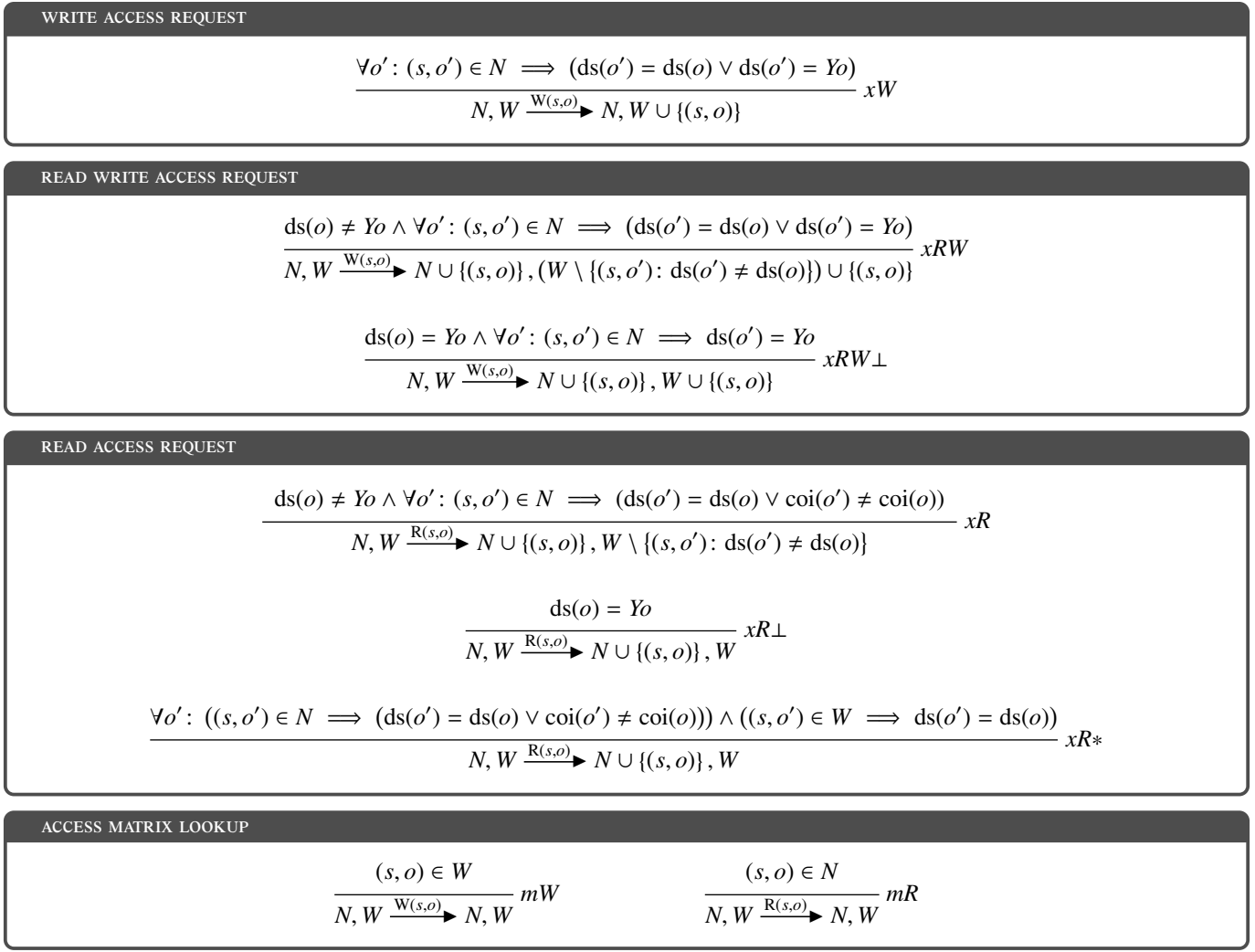

\begin{mymathbox}[ams gather*, title=\textsc{write access request}, colframe=white!30!black]
\begin{prooftree}
\forall o' \colon (s, o') \in N \implies \left( \ds{o'} = \ds{o} \vee \ds{o'} = \botDS \right)
\justifies
N, W \lts{\Write{s}{o}} N, W \cup \left\{ (s,o) \right\}
\using xW
\end{prooftree}
\end{mymathbox}

\begin{mymathbox}[ams gather*, title=\textsc{read write access request}, colframe=white!30!black]
\begin{prooftree}
\ds{o} \neq \botDS \wedge
\forall o' \colon (s, o') \in N \implies \left( \ds{o'} = \ds{o} \vee \ds{o'} = \botDS \right)
\justifies
N, W \lts{\Write{s}{o}} N 
\cup \left\{ (s,o) \right\}, 
\left(W \setminus \left\{ (s,o') \colon \ds{o'} \neq \ds{o} \right\} \right) \cup \left\{ (s,o) \right\}
\using xRW
\end{prooftree}
\\[15pt]
\begin{prooftree}
\ds{o} = \botDS \wedge
\forall o' \colon (s, o') \in N \implies \ds{o'} = \botDS
\justifies
N, W \lts{\Write{s}{o}} N 
\cup \left\{ (s,o) \right\}, 
W \cup \left\{ (s,o) \right\}
\using xRW\bot
\end{prooftree}
\end{mymathbox}

\begin{mymathbox}[ams gather*, title=\textsc{read access request}, colframe=white!30!black]
\begin{prooftree}
\begin{array}{c}
\ds{o} \neq \botDS
\wedge
\forall o' \colon (s, o') \in N \implies \left( \ds{o'} = \ds{o} \vee \coi{o'} \neq \coi{o} \right)
\end{array}
\justifies
N, W
\lts{\Read{s}{o}}
N \cup \left\{ (s,o) \right\}, W \setminus \left\{ (s,o') \colon \ds{o'} \neq \ds{o} \right\}
\using xR
\end{prooftree}
\\[15pt]
\begin{prooftree}
\ds{o} = \botDS
\justifies
N, W \lts{\Read{s}{o}} N \cup \left\{ (s,o) \right\}, W
\using xR\bot
\end{prooftree}
\\[15pt]
\begin{prooftree}
 \forall o' \colon
 \left( (s, o') \in N \implies \left( \ds{o'} = \ds{o} \vee \coi{o'} \neq \coi{o} \right) \right)
 \wedge
 \left(
 (s, o') \in W \implies \ds{o'} = \ds{o} \right)
\justifies
N, W \lts{\Read{s}{o}} N \cup \left\{ (s,o) \right\}, W
\using {xR*}
\end{prooftree}
\end{mymathbox}

\begin{mymathbox}[ams gather*, title=\textsc{access matrix lookup}, colframe=white!30!black]
\begin{prooftree}
(s,o) \in W
\justifies
N, W \lts{\Write{s}{o}} N, W
\using mW
\end{prooftree}
\qquad\qquad\qquad
\begin{prooftree}
(s,o) \in N
\justifies
N, W \lts{\Read{s}{o}} N, W
\using mR
\end{prooftree}
\end{mymathbox}
\caption{Explicit rules for Brewer-Nash policies, including sanitized data.}\label{fig:explicit-rules}
\end{figure*}

The two options above, correspond to
the following operational rule.
In the following, $W \setminus \left\{ (s,o') \colon \ds{o'} \neq \ds{o} \right\}$
explicitly revokes any offending write accesses.
\begin{equation}\tag{\small{xR}}\label{eq:explicit-strong-read}
\begin{prooftree}
\forall o' \colon (s, o') \in N \implies \left( \ds{o'} = \ds{o} \vee \coi{o'} \neq \coi{o} \right)
\justifies
N, W \lts{\Read{s}{o}} N \cup \left\{ (s,o) \right\},
 W \setminus \left\{ (s,o') \colon \ds{o'} \neq \ds{o} \right\}
\end{prooftree}
\end{equation}
Thus the rule above can be used to more explicitly realise the transition in Fig.~\ref{fig:eg:revoke}, as follows.
\[
\left\{ (\szero,\oone) \right\},
\left\{ (\szero,\oone) \right\}
\lts{\Read{\szero}{\otwo}}
\left\{ (\szero,\oone), (\szero,\otwo) \right\}, \emptyset
\] 
Similarly, the operations in Fig.~\ref{fig:eg:write-only}
consist of a write operation followed by a read that revokes write access to two datasets.
\[
\begin{array}{rl}
\emptyset,
\left\{ (\szero,\ozero), (\szero,\otwo) \right\}
\lts{\Write{\szero}{\oone}}
&
\emptyset,
\left\{ (\szero,\ozero), (\szero,\oone), (\szero,\otwo) \right\}
\\
\lts{\Read{\szero}{\oone}}
&
\left\{ (\szero,\oone) \right\},
\left\{ (\szero,\oone) \right\}
\end{array}
\]
An alternative is to allow a read access, without revoking write access,
under conditions 
ensuring that the *-property will be preserved for everything already recorded in $W$.
The condition is that all objects that the subject can write to according to $W$ must be in the same dataset as where read access is requested.
This condition concerning $W$ is of course in addition to the standard assumption that the simple security rule holds with respect to objects the subject can read from.
This restrictive read rule is expressed as follows.
\begin{equation}\tag{xR*}\label{eq:explicit-weak-read}
\begin{prooftree}
\begin{array}{c}
\forall o' \colon (s, o') \in N \implies (\ds{o'} = \ds{o} \vee \coi{o'} \neq \coi{o})
\\
\wedge
~ (s, o') \in W \implies \ds{o} = \ds{o'}
\end{array}
\justifies
N, W 
\lts{\Read{s}{o}} N \cup \left\{ (s,o) \right\}, W 
\end{prooftree}
\end{equation}
The ``*'' in the rule name above highlights the additional check required to preserve the *-property.
We will return to
the rules above in Eq's~\eqref{eq:explicit-strong-read} and~\eqref{eq:explicit-weak-read} 
in detail in subsequent sections,
since they are novel rules, and it is not immediately obvious that they do in fact preserve the *-property.
Thus our explicit rules benefit from the ensuing verification.

Notice that if Eq.~\eqref{eq:explicit-weak-read} applies, then Eq.~\eqref{eq:explicit-strong-read}
also applies and has the same effect.
However, if a policy features both rules then, whenever Eq.~\eqref{eq:explicit-weak-read} is not enabled
it is possible to trigger a suitable warning that 
explains to the subject that reading the object in question
 is going to result in write access being revoked somewhere else.
Thus, distinguishing these transitions
helps to demarcate an important transition in the life of a subject.
Eq.~\eqref{eq:explicit-weak-read} ensures a subject can continue to read and write within a dataset;
while if only Eq.~\eqref{eq:explicit-strong-read} 
applies then a subject induces a state transition that may prevent the subject from writing again.


\section{Full definitions and theorems to cover}\label{sec:bridge}

Here we collate the key definitions and theorems that we mechanise in this work.
In subsequent sections, we explain the theorems in more detail including how \setlog and Coq are used to mechanise them.
The theorems are the four theorems stated by Brewer-Nash in the order that they appear in that work to facilitate a close comparison.
These reformulated theorems make use of the modernised notation and definitions from the previous section.

\subsection{The full explicit model with sanitized data}

For complete coverage of Brewer-Nash we introduce the concept of sanitized data,
that didn't play a role in the previous section.
The explicit rules from the previous section are expanded and collated in Fig.~\ref{fig:explicit-rules}.

Sanitized data can refer to public information,
general market information, or, perhaps, data checked and approved to be distributed within the system without revealing confidential information of an entity regulated by the policy.
How data is sanitized is perpendicular to the Brewer-Nash model.
Indeed, there is a science of 
data sanitization, using ``association rules'' for example~\cite{Atallah1999},
that can determine how some confidential data may be sanitized for consumption beyond organisational boundaries.

In the Brewer-Nash security policy model,
a special dataset, denoted by $\botDS$, 
contains sanitized objects,
which is the only dataset in a conflict of interest class which we call
$\Sanitized$ (see Fig.~\ref{fig:eg:sanitized}).
The *-property
in the presence of sanitized data
is formulated as follows.
\begin{definition}[*-property]\label{def:sp}
A subject $s$, object $o$ and matrix $N$ satisfy the *-property
whenever:
\[
\forall o' : (s,o') \in N \implies \left( \ds{o} = \ds{o'} \vee \ds{o'} = \botDS \right)
\]
\end{definition}

Besides strengthening the *-property as shown above, 
we have handled the presence of santized data by splitting 
some rules from the previous section into multiple rules.
For example, we have the rules
 $\textit{xRW}$ and $\textit{xRW}\bot$
 for read-write access in Fig.~\ref{fig:explicit-rules}.
  Notice that clause
 $\ds{o} \neq \botDS$ ensures that  
 the transition $\textit{xRW}$ explicitly does not apply if
 the read-write request concerns an object in the santized dataset.
This is because the $\textit{xRW}$ rule is designed such that write access may be revoked to other datasets and revoking write access is not necessary when sanitized
 data is accessed. 
In contrast, the rule $\textit{xRW}\bot$ describes the effect of requesting read-write access to sanitized data,
where no revocation occurs.
The rules 
 $\textit{xR}$ and $\textit{xR}\bot$
 are separated for the same reason.
 We expand on this explanation next by providing an example where access to sanitized behaves differently.

\begin{figure}[h!]
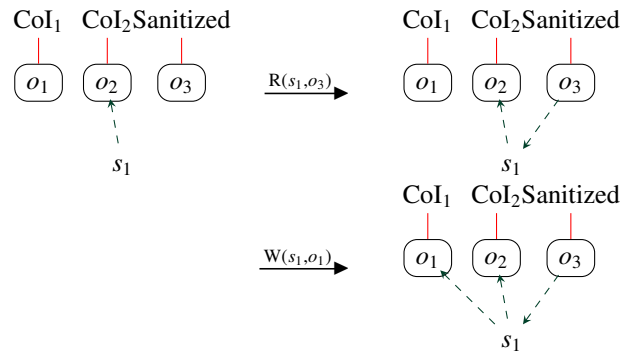

\[
\!\!\!\!\!\!
\begin{array}{rl}
  \begin{array}{c}
  \begin{array}{c@{~~}c@{}c}    
    \vczero0 & \vcone1
    & \vbot2
    \\[1em]
  \vmod0{ ~\vozero0~  } &
  \vmod1{  ~\voone1~ } &
  \vmod2{ ~\votwo2~  }
  \end{array}
  \\[3em]
  \vszero0 
  \end{array}
  \Nedges{oone1/szero0}
  \Redges{czero0/mod0,cone1/mod1,bot2/mod2}
\lts{\Read{\szero}{\otwo}} &
  \begin{array}{c}
  \begin{array}{c@{~~}c@{}c}    
    \vczero0 & \vcone1
    & \vbot2
    \\[1em]
  \vmod0{ ~\vozero0~  } &
  \vmod1{  ~\voone1~ } &
  \vmod2{ ~\votwo2~  }
  \end{array}
  \\[3em]
  \vszero0 
  \end{array}
  \Nedges{oone1/szero0,szero0/otwo2}
  \Redges{czero0/mod0,cone1/mod1,bot2/mod2}
\\ \lts{\Write{\szero}{\ozero}} &
  \begin{array}{c}
  \begin{array}{c@{~~}c@{}c}    
    \vczero0 & \vcone1
    & \vbot2
    \\[1em]
  \vmod0{ ~\vozero0~  } &
  \vmod1{  ~\voone1~ } &
  \vmod2{ ~\votwo2~  }
  \end{array}
  \\[3em]
  \vszero0 
  \end{array}
  \Nedges{ozero0/szero0,oone1/szero0,szero0/otwo2}
  \Redges{czero0/mod0,cone1/mod1,bot2/mod2}
\end{array}
\]
\caption{
Even after writing to a private dataset,
the user can read from sanitized data.
Furthermore, reading sanitized data does not prevent writing to further datasets.
}\label{fig:eg:sanitized}
\end{figure}

Consider the example in Fig.~\ref{fig:eg:sanitized} to understand how rules behave differently when sanitized data is involved.
The read transition in that figure uses rule $\textit{xR}\bot$,
which allows objects in the sanitized dataset to be read without revoking write access anywhere.
In contrast, rule \textit{xR} does not apply to this example, and if the condition $\ds{o} \neq \botDS$ were dropped from \textit{xR},
then \textit{xR} would revoke write access where it is not necessary to do so.
Fig.~\ref{fig:eg:sanitized}
also illustrates that write access can still be freely requested elsewhere
after reading.
This behaviour contrasts to read access to data that is not sanitized,
which always blocks write operations in other datasets from that point onward.



The notation $\bot$ in rule name signals the consistency of 
Brewer-Nash with the lattice-based interpretation of Sandhu~\cite{Sandhu93},
mentioned in the previous section.
Sandhu assigns for
sanitized objects, and also subjects who have not read from a dataset that is not sanitized,
the bottom security label in a lattice of labels.
Therefore, since in Sandhu's lattice-based model reading is permitted downwards and writing upwards,
subjects with the bottom security label can still request read access to sanitized data while writing anywhere.


Consider now the example in Fig.~\ref{fig:eg:sanitized-rw}.
This shows an exceptional revocation behaviour associated with sanitized data.
Initially, subject $\szero$ can read and write to a sanitized object.
The subject can, by calling rule $xRW$,
induce a state change where their write access to the sanitized object is revoked.
Observe that write access is enabled for object $\oone$.
This contrasts to Fig.~\ref{fig:eg:revoke} which did not involve
sanitized data and for which only read access was enabled after reading from two datasets.


\begin{figure}[h!]
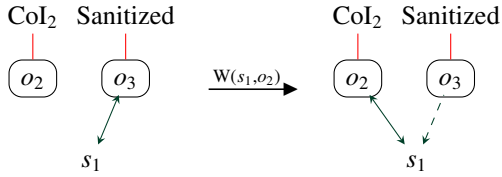

\[
\!\!\!\!\!\!
  \begin{array}{c}
  \begin{array}{c@{~~}c}    
     \vcone1
    & \vbot2
    \\[1em]
  \vmod1{  ~\voone1~ } &
  \vmod2{ ~\votwo2~  }
  \end{array}
  \\[3em]
  \vszero0 
  \end{array}
  \Wedges{otwo2/szero0}
  \Redges{czero0/mod0,cone1/mod1,bot2/mod2}
\lts{\Write{\szero}{\oone}} 
  \begin{array}{c}
  \begin{array}{c@{~~}c}    
   \vcone1
    & \vbot2
    \\[1em]
  \vmod1{  ~\voone1~ } &
  \vmod2{ ~\votwo2~  }
  \end{array}
  \\[3em]
  \vszero0 
  \end{array}
  \Nedges{szero0/otwo2}
  \Wedges{oone1/szero0}
  \Redges{czero0/mod0,cone1/mod1,bot2/mod2}
\]
\caption{
Only sanitized data has the property
that if read-write access has been granted,
then read-write access may still be granted in another dataset.
}\label{fig:eg:sanitized-rw}
\end{figure}

\subsection{\label{bnthm}The four Brewer-Nash Theorems}

We state the four theorems proposed by Brewer \& Nash in their original form and using more modern precision.
All of these theorems are stated with respect to the labelled transition system of the explicit model generated by the rules in Fig.~\ref{fig:explicit-rules}.
Thus, all theorems in this section
range over any policy consisting of
a fixed set of subjects, objects, CoICs,
mappings $\ds{\,.\,}$, and $\coi{\,.\,}$,
and satisfy axioms:
\begin{gather}
\forall o_1, o_2 \colon \ds{o_1} = \ds{o_2} \implies \coi{o_1} = \coi{o_2} 
\label{ax1}
\\
\forall o \colon \ds{o} = \botDS \Leftrightarrow \coi{o} =
\Sanitized 
\label{ax2}
\end{gather}
Axiom~\eqref{ax1}, above, ensures that all objects in a dataset are also in the same CoIC. 
Axiom~\eqref{ax2} formalises the requirement that there is a unique dataset and CoIC for sanitized data.
A consequence of Axiom~\eqref{ax2} 
is that, when sanitized data is considered,
whenever a subject, object and matrix 
satisfy the *-property then they satisfy the simple security rule. 
Both directions of the implication in Axiom~\eqref{ax2} are necessary to establish that fact.
This helps explain why the simple security rule is redundant in the rules in Fig.~\ref{fig:explicit-rules} involving write access.

For the first theorem, Brewer \& Nash state: ``Once a subject has accessed an object the only
other objects accessible by that subject lie within
the same company dataset or within a different
conflict-of-interest class.''

This essentially says that, w.r.t.~Fig.~\ref{fig:explicit-rules}, 
it is an invariant
that all subject-object pairs in $N$ 
satisfy the simple security rule.
A property $I$ is proven to be an invariant 
by establishing that 
if $I(N,W)$ holds and, for any action $\alpha$ and state $N', W'$, if
$N, W \lts{\alpha} N', W'$,
then 
$I(N', W')$ holds.
Furthermore, the initial state $\emptyset, \emptyset$
must satisfy $I$.
This ensures that the property holds in all states reachable from the initial state.
\begin{theorem}\label{thm1}
The following is an invariant for states $N, W$:
for all $(s,o) \in N$,
$s$, $o$ and $N$ satisfy the simple security rule.
\end{theorem}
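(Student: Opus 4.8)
The plan is to follow the invariance schema stated just before the theorem: verify the base case $I(\emptyset,\emptyset)$ and then show that each of the eight rules in Fig.~\ref{fig:explicit-rules} sends a state satisfying $I$ to a state satisfying $I$; a routine induction on the length of the transition sequence then yields the invariant on all reachable states. The base case is immediate, since with $N=\emptyset$ the universally quantified body of $I$ holds vacuously. For the inductive step, the first simplification is that $I(N,W)$ constrains only $N$; hence the three rules that leave $N$ unchanged---namely $\textit{xW}$, $\textit{mW}$ and $\textit{mR}$---preserve $I$ automatically, however they alter $W$.

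The remaining five rules ($\textit{xRW}$, $\textit{xRW}\bot$, $\textit{xR}$, $\textit{xR}\bot$, $\textit{xR*}$) all replace $N$ by $N' = N \cup \{(s,o)\}$ for the requested pair $(s,o)$. For these I would reduce the preservation of $I$ to the single obligation that $s$, $o$ and $N$ themselves satisfy the simple security rule of Definition~\ref{def:ss}. The reduction works because simple security treats a subject's two objects symmetrically: any pair of $N'$ whose subject is not $s$ is untouched; any pair already in $N$ need only be rechecked against the new witness $o'=o$; and the new pair $(s,o)$ need only be checked against the objects already read by $s$. Via $I(N)$ and the symmetry of $=$ and $\neq$, all of these cases collapse to the statement that $o$ is compatible with every object previously read by $s$.

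It then remains to derive simple security for $(s,o,N)$ from each rule's premise. For $\textit{xR}$ and $\textit{xR*}$ the premise already contains this condition verbatim, so there is nothing to do. For $\textit{xRW}$ and $\textit{xRW}\bot$ the premise is precisely the *-property for $(s,o,N)$ of Definition~\ref{def:sp}, and I would discharge the obligation by appealing to the consequence of Axiom~\eqref{ax2} recorded in the text, namely that the *-property entails the simple security rule. The sole rule needing a hands-on argument is $\textit{xR}\bot$, whose premise is merely $\ds{o}=\botDS$: here I would use Axiom~\eqref{ax2} to obtain $\coi{o}=\Sanitized$, and then, for each $(s,o')\in N$, split on whether $\ds{o'}=\botDS$ (so $\ds{o'}=\ds{o}$) or $\ds{o'}\neq\botDS$ (so $\coi{o'}\neq\Sanitized=\coi{o}$ by the converse direction of Axiom~\eqref{ax2}), giving simple security in both branches.

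The main obstacle is therefore not the inductive bookkeeping but the three rules touching sanitized data, whose premises speak of datasets and the distinguished dataset $\botDS$ rather than of conflict-of-interest classes. Translating between the two is exactly where the biconditional Axiom~\eqref{ax2} is indispensable, with both directions used; the care required is to invoke it in the correct direction in each subcase rather than to carry out any intricate reasoning.
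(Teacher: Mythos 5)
Your proof is correct, and the invariance schema you follow (initial state plus preservation under each of the eight rules of Fig.~\ref{fig:explicit-rules}) is exactly the schema the paper sets up immediately before the theorem; your per-rule analysis is also sound, including the reduction-by-symmetry step and the case split for \textit{xR}$\bot$. The genuine difference is in how the obligations are discharged: the paper gives no manual argument at all. Its proof of Theorem~\ref{thm1} is the \setlog mechanisation of Sec.~\ref{sec:model}: the invariant is encoded as $\simpSec$, the rules are encoded as operations ($\strRead$ covering \textit{xR*} and \textit{xR}$\bot$, $\rRead$ for \textit{xR}, $\rqWrite$ for \textit{xW}, $\readWrite$ for \textit{xRW} and \textit{xRW}$\bot$), the verification condition generator emits one invariance lemma per operation plus an initial-state condition, and \setlog discharges each automatically as an unsatisfiability check; a further user-defined \setlog theorem (the $t1$/$theorem1$ entries of Table~\ref{table:1}) then derives the literal statement of the theorem from $\simpSec$. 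So where you derive simple security for $(s,o,N)$ from each premise --- via the *-property-implies-simple-security consequence of Axiom~\eqref{ax2} for \textit{xRW} and \textit{xRW}$\bot$, and a two-directional case split on Axiom~\eqref{ax2} for \textit{xR}$\bot$ --- the paper delegates precisely this content to the solver, recording only the prose remark that both directions of Axiom~\eqref{ax2} are needed. Your manual proof buys human-readable insight into why the sanitized-data rules are the delicate ones; the paper's route buys machine-checked assurance (which is the point of the paper), at the cost of the argument itself living only in the replication package. One cosmetic mismatch: the \setlog model carries security labels inside $N$ and $W$ and keeps them aligned with $L$ through auxiliary invariants, so the mechanised obligations are phrased over labelled pairs rather than over $\ds{\,.\,}$ and $\coi{\,.\,}$ directly, but this does not affect the substance of your argument.
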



The second theorem of Brewer \& Nash states: ``A subject can at most have access to one company
dataset in each conflict-of-interest class.''
This is captured formally by the consequent of Theorem~\ref{thm2} below.
\begin{theorem}\label{thm2}
Consider any state $N, W$ satisfying the property that,
for all $(s,o) \in N$,
$s$, $o$ and $N$ satisfy the simple security rule.
For any subject $s$ and objects $o_1$ and $o_2$,
if $(s, o_1) \in N$ and $(s, o_2) \in N$ 
and 
$\coi{o_1} = \coi{o_2}$,
then 
$\ds{o_1} = \ds{o_2}$.
\end{theorem}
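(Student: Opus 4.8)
The plan is to observe that this theorem is an almost immediate consequence of the simple security rule, once we recognise that its hypothesis already hands us the invariant established in Theorem~\ref{thm1}. No induction over transitions is needed here; the argument is a pure instantiation of Definition~\ref{def:ss}, so I would present it as a short deduction rather than as a fresh invariance proof.

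First I would fix the subject $s$ and the two objects $o_1, o_2$ with $(s,o_1) \in N$, $(s,o_2) \in N$ and $\coi{o_1} = \coi{o_2}$. Since $(s,o_1) \in N$, the standing hypothesis tells us that $s$, $o_1$ and $N$ satisfy the simple security rule, that is, $\forall o' \colon (s,o') \in N \implies (\ds{o'} = \ds{o_1} \vee \coi{o_1} \neq \coi{o'})$. Next I would instantiate the universally quantified $o'$ in this statement with $o_2$. Because $(s,o_2) \in N$ discharges the antecedent, we obtain the disjunction $\ds{o_2} = \ds{o_1} \vee \coi{o_1} \neq \coi{o_2}$. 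The hypothesis $\coi{o_1} = \coi{o_2}$ rules out the second disjunct, so the first must hold, giving $\ds{o_1} = \ds{o_2}$ as required.

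There is no real obstacle here: the content of the theorem is that simple security, quantified over the whole matrix, forbids a subject from holding two objects in distinct datasets of a single conflict-of-interest class, which is exactly the disjunctive constraint of Definition~\ref{def:ss} with its second disjunct suppressed by the equal-CoIC assumption. The only point requiring a little care is the direction of the instantiation --- one applies the rule to the pair $(s,o_1)$ and then instantiates the inner quantifier at $o_2$ (the symmetric choice works equally well). If I were mechanising it in \setlog I would expect it to be discharged automatically, since it reduces to propositional reasoning over the unfolded set-membership and (in)equality constraints; the only modelling care needed is to feed in the hypothesis as the already-proven invariant of Theorem~\ref{thm1} rather than attempting to re-derive it.
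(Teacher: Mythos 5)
Your proof is correct and matches the paper's approach: the paper likewise treats Theorem~\ref{thm2} as a direct consequence of the simple security hypothesis, declaring it in \setlog as a user-defined theorem of the form $\lnot(\simpSec(\N) \implies t2(\N))$ and letting the solver discharge it, which is precisely the instantiation argument you spell out by hand. Your closing remark about feeding in $\simpSec$ as an already-proven invariant rather than re-deriving it is exactly what the paper does.
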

The premise of the Theorem~\ref{thm2}, was absent in the formulation due to Brewer \& Nash.
The premise clarifies that the Theorem holds for any state satisfying the invariant established in Theorem~\ref{thm1}.
Hence the consequent of Theorem~\ref{thm2} is itself preserved by all transitions, and hence is itself an invariant.

The third theorem of Brewer \& Nash states: 
``If for some conflict-of-interest class $X$ there are
$X_v$ company datasets then the minimum number of
subjects which will allow every object to be accessed
by at least one subject is $X_v$.''
The intention of this theorem is to explain to a manager
how many subjects, e.g., consultants, are required to serve all datasets, e.g., companies,
without violation of CoIs.

In the above, ``accessed by'' is interpreted as read access as recorded by $N$.
This leads to the following formalisation of this property as an invariant, where 
$\lvert A \rvert$ denotes the cardinality of set $A$.
\newcommand{\X}{X}
\newcommand{\XV}{X_v}
\begin{theorem}\label{thm3}
Let $\mathcal{S}$ be some set of subjects
and $\mathcal{D}$ be some set of datasets
 fixed for the policy.
Also, for any CoIC $\X$,
let $\XV = \left\{ Y \in \mathcal{D} \colon \exists o. \ds{o} = Y \wedge \coi{o} = \X \right\}$.

The following is an invariant for states $N,W$.
For any CoIC $\X$, if
for all datasets $Y \in \XV$
there exists subject $s$ and object $o$ such that $(s, o) \in N$ and $\ds{o} = Y$,
then 
$\left| \XV \right| \leq \left| \mathcal{S} \right|$.
\end{theorem}

Theorem 4 of Brewer \& Nash is formulated as:
``The flow of unsanitized information is confined to
its own company dataset; sanitized information may
however flow freely throughout the system.''

The notion of flow is not defined by Brewer \& Nash, 
and we defer the definition that we will employ until Sec.~\ref{sec:if}.
However, we assume here that there is some well-defined notion of
flow of information from one object $o_1$ to another object $o_2$,
starting from a given state $N, W$, denoted $o_1 \leadsto o_2$ in the following theorem.
\begin{restatable}{theorem}{theorem4}\label{thm4}
The following is an invariant for states $N, W$:
For objects $o$ and $o'$,
if $o \leadsto o'$ starting in $N, W$, then $\ds{o} = \botDS$ or $\ds{o} = \ds{o'}$.
\end{restatable}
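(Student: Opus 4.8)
The plan is to decompose the theorem exactly as anticipated in the introduction: into a purely relational invariant that \setlog can discharge automatically, and a short induction on the flow relation that is mechanised in Coq. First I would fix the one-step notion of flow to be introduced in Sec.~\ref{sec:if}: information flows \emph{directly} from $o$ to $o'$ in a state $N, W$ whenever some subject both reads $o$ and writes $o'$, i.e.\ $\exists s \colon (s,o) \in N \wedge (s,o') \in W$, and take $o \leadsto o'$ to be the reflexive--transitive closure of this direct-flow relation in the fixed state $N, W$.

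The core idea is to strengthen Theorem~\ref{thm4} to a one-step invariant $I(N,W)$ that mentions only the access matrices and no chains:
\[
\forall s, o, o' \colon (s,o) \in N \wedge (s,o') \in W \implies \left( \ds{o} = \botDS \vee \ds{o} = \ds{o'} \right).
\]
I would have \setlog establish that $I$ is an invariant. The base case is immediate, since $I(\emptyset,\emptyset)$ holds vacuously. For preservation, \setlog performs a case analysis over every rule in Fig.~\ref{fig:explicit-rules}. The lookup rules \textit{mR}, \textit{mW} leave the state unchanged. For the write rules \textit{xW} and \textit{xRW}, the newly recorded write pair $(s,o)$ must be matched against each read pair $(s,o') \in N$, and the guard $\forall o' \colon (s,o') \in N \implies (\ds{o'} = \ds{o} \vee \ds{o'} = \botDS)$ delivers exactly the required disjunction; in \textit{xRW}$\bot$ the side condition forces all such reads to be sanitized, satisfying the first disjunct. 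For the read rules \textit{xR} and \textit{xR*}, the newly recorded read pair $(s,o)$ must be checked against the surviving write pairs: in \textit{xR} the pruning $W \setminus \left\{ (s,o') \colon \ds{o'} \neq \ds{o} \right\}$ retains only writes in $\ds{o}$, and in \textit{xR*} the extra guard $(s,o') \in W \implies \ds{o} = \ds{o'}$ does the same, while in \textit{xR}$\bot$ the side condition $\ds{o} = \botDS$ discharges the first disjunct outright.

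The Coq fragment then shows that $I$ implies the chain statement by induction on the length of the flow from $o$ to $o'$. In the reflexive base case $o = o'$ the conclusion $\ds{o} = \ds{o'}$ is trivial. For a chain $o \to a \leadsto o'$ (one direct step followed by a shorter flow), applying $I$ to the first step gives either $\ds{o} = \botDS$, in which case we are done, or $\ds{o} = \ds{a}$; in the latter case $\ds{a} \neq \botDS$, so the inductive hypothesis on the shorter chain forces $\ds{a} = \ds{o'}$, whence $\ds{o} = \ds{a} = \ds{o'}$. The essential observation is that once the source is non-sanitized, the dataset is propagated unchanged along the entire chain.

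I expect the main obstacle to be pinning down the right invariant: $I$ must be strong enough to be inductive under all six state-changing rules simultaneously and, at the same time, must align precisely with the direct-flow relation so that the Coq induction closes. The delicate cases are the read rules, where $N$ grows while $W$ is pruned; here one must verify that the pruning removes exactly those write pairs that would otherwise break $I$ against the freshly added read pair. A secondary concern is the interface between the two tools---the \setlog lemma must be phrased so that its conclusion is literally the hypothesis consumed by the Coq induction, so that the two mechanised fragments compose into a proof of Theorem~\ref{thm4}.
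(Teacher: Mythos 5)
Your \setlog-side invariant is exactly the one the paper uses: your $I(N,W)$ coincides with the paper's $*(N,W)$ (the \setlog predicate $\starProp$), and its preservation under every rule of Fig.~\ref{fig:explicit-rules} is precisely Lemma~\ref{lemma:setlog-star}. The genuine gap is in your definition of $\leadsto$. You take $o \leadsto o'$ to be the reflexive--transitive closure of the \emph{static} relation ``some $s$ has $(s,o) \in N$ and $(s,o') \in W$'' inside one fixed state. The paper, following Kessler~\cite{Kessler1992} (Definition~\ref{def:if}), defines a flow \emph{dynamically}: a sequence of big-step transitions starting in $N,W$ in which subject $s_i$ reads $o_i$ and only later writes $o_{i+1}$, with arbitrary other operations interleaved. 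These notions are not equivalent, and the difference is the whole point of the theorem. In the execution of Fig.~\ref{fig:eg:transitive}, information flows from $o_1$ to $o_3$ (via $s_0$ into $o_2$, then via $s_1$ into $o_3$), but write access to $o_2$ is revoked before $s_1$'s write to $o_3$ is recorded; consequently there is \emph{no} reachable state in which your static chain from $o_1$ to $o_3$ exists, and your theorem says nothing about this flow. Your proposal therefore proves a strictly weaker statement that misses exactly the flows that make Theorem~\ref{thm4} delicate.

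Repairing this forces a second ingredient that your plan omits: monotonicity of read access. With the dynamic definition, the read $(s_i,o_i)$ is recorded in some state, and the corresponding write of $o_{i+1}$ lands in a possibly much later state; to apply the $*$-invariant at that later state you must know the read pair is \emph{still} in $N$ there. This is why Lemma~\ref{lemma:setlog-star} also discharges $N \subseteq N'$ for every transition (reads are never revoked, unlike writes), why Corollary~\ref{lemma:axioms} lifts both facts to big-step transitions, and why the Coq induction (Theorem~\ref{thm:Ricardo}) runs over the number of read--write pairs in the flow rather than over edges of a static graph. Your chain induction itself transfers to that setting essentially unchanged, modulo one small slip: from $\ds{o} = \ds{a}$ you cannot conclude $\ds{a} \neq \botDS$ (the disjunction is not exclusive); the case $\ds{a} = \botDS$ must instead be closed by observing that it gives $\ds{o} = \botDS$ directly.
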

Brewer \& Nash, in their proof of Theorem~\ref{thm4}, we believe, just jump to their desired 
conclusion by defining a relation that satisfies a given property.
Theorem~\ref{thm4} turns out to be the trickiest theorem to define more precisely and prove, as we explain in detail in Sec.~\ref{sec:if}.


\section{Automated reasoning about invariants in \setlog
}\label{sec:model}





In this section we show how we use \setlog to formally specify and verify
that the simple security rule and *-property are invariants of 
our formal interpretations of Brewer-Nash policies.
This section covers the mechanisation of
Theorem~\ref{thm1} and Theorem~\ref{thm2}  
and lays essential groundwork towards the mechanisation of Theorem~\ref{thm4} (Sec.~\ref{sec:if}).

The tool \setlog  is a constraint logic programming (CLP) language and satisfiability solver implemented in Prolog where finite sets are first-class citizens \cite{setlog,Dovier00}. The tool implements decision procedures for several fragments of set theory and set relation algebra \cite{DBLP:journals/jar/CristiaR20,DBLP:journals/jar/CristiaR21a,DBLP:journals/tplp/CristiaR23,10.1145/3625230,DBLP:journals/corr/abs-2208-03518}. A few in-depth empirical studies provide evidence that \setlog is able to solve non-trivial problems, e.g. \cite{DBLP:journals/jar/CristiaR21b,DBLP:journals/jar/CristiaR21}. On top of its CLP language, \setlog provides a state machine specification language (SMSL) inspired in the B notation \cite{Abrial00}. A verification condition generator (VGC) can then be used to automatically generate verification conditions (VC) ensuring that the state machine verifies some properties \cite[Sect. 11]{Rossi00}. \setlog inherits many Prolog features. For instance, variables must begin with a capital letter; the main program building block are predicates expressed as Horn clauses of the form \fbox{$head(params) \text{ :- } body.$} where $body$ is a \setlog formula (note the dot at the end of $body$).

We describe here the \setlog formalisation of the Brewer-Nash policy model,
available in the companion replication package~\cite{rep-package}.
The set of objects of the system ($\mathit{Objects}$), the function mapping objects onto security classes ($L$) and the dataset containing sanitized information ($\Yo$) and its conflict of interest class ($\Xo$, aka.~$\Sanitized$ in the previous section) are introduced as parameters of the model.
\begin{align*}
\parameters([\mathit{Objects},L,\Yo,\Xo]).
\end{align*}
The state space of the system is given by two state variables: $\N$, denoting the current read accesses for each subject; and $\W$, denoting the current write accesses for each subject.
\begin{align*}
& \variables([\N,\W]).
\end{align*}

Axioms are used to state properties of parameters. For example, $L$ is a function whose domain is $\mathit{Objects}$.\footnote{Instead of using the exact \setlog ASCII notation, we rather use a more math-oriented one thus avoiding some syntactic nuisances.}
\begin{align*}
& \axiomsl(axiomL). \\
& axiomL(L,\mathit{Objects}) \text{ :- } \Pfun(L) \land \Dom(L,\mathit{Objects}).
\end{align*}
Above, $\Pfun$ is a \setlog constraint stating that its argument is a function whereas $\Dom$ states that $\mathit{Objects}$ is the domain of $L$. 
Since finite sets are the main data structure in \setlog,
 \setlog admits sets of ordered pairs, i.e., binary relations. 

Relations $\N$ and $\W$,
as we will shortly see,
are augmented with the labels associated with objects (the dataset and CoIC) because \setlog proofs become faster.
This is a difference between purely theoretical considerations such as those discussed in Sec.~\ref{sec:motivation} and~\ref{sec:bridge} and the representation of those concepts in an automated tool. Invariants are used to ensure that labels in $\N$ and $\W$ are subject to the conditions imposed on $L$.

State invariants are given as predicates that depend on parameters and state variables.
An invariant property 
appealing to 
the simple security rule is encoded as follows in \setlog.
\begin{align*}
& \invariant(\simpSec). \\
& \simpSec(\N) \text{ :- } \\
& \quad \Foreach (S_1,(O_1,(C_1,D_1))), (S_2,(O_2,(C_2,D_2))) \in \N : \\
& \qquad S_1 = S_2 \implies (C_1 \neq C_2 \lor D_1 = D_2).
\end{align*}
That is, $\N$ is a set of ordered pairs of the form $(S,(O,\ell))$ where $S$ is a subject, $O$ an object and $\ell$ a security label (which in turn is of the form $(C,D)$ for some CoIC $C$ and dataset $D$). Then, if $(S,(O,\ell)) \in \N$ it means that subject $S$ is accessing object $O$ in read mode and the security label of $O$ is $\ell$. In this way, $\simpSec$ states that, if a subject is accessing two or more objects in read mode, their CoIC are different or their datasets are the same.
It is easy to check that $\simpSec$ is a faithful formalisation of the invariant in Theorem~\ref{thm1}.

Note that in $\simpSec$ the quantification is a restricted quantification made with ordered pairs instead of variables. A restricted quantification is a formula of the form $\forall x \in A: \phi$ equivalent to $\forall x(x \in A \implies \phi)$. The presence of ordered pairs as quantified expressions is a distinctive feature of \setlog which allows 
us to increase the decidable fragment of formulas featuring restricted quantifiers~\cite{DBLP:journals/corr/abs-2208-03518}.


An invariant preserving the 
*-property for all pairs in $W$ is defined as follows.
The preservation of this invariant will be used in Sec.~\ref{sec:if} as part of the proof of Theorem~\ref{thm4}.
\begin{align*}
& \invariant(\starProp). \\
& \starProp(\Yo,\N,\W) \text{ :- } \\
& \quad \Foreach (S_1,(O_1,(C_1,D_1))) \in N; (S_2,(O_2,(C_2,D_2))) \in \W : \\
& \qquad S_1 = S_2 \implies (D_1 = D_2) \lor D_1 = \Yo.
\end{align*}
That is, $\W$ has a similar structure to $\N$ although its interpretation is that subject $s$ is accessing object $o$ in write mode. In this way, $\starProp$ states that if a subject accesses some objects in read mode and others in write mode then they must belong to the same dataset or the subject is reading only sanitized information ($\Yo$). As with $\N$, the property where $(S,(O,\ell)) \in W$ implies $(O,\ell) \in L$, is stated as an invariant.

After giving all the invariants the initial state can be defined.
This states that no access is granted to any subject initially.
\begin{align*}
& \initial(\init). \\
& \init(\N,\W) \text{ :- } \N = \emptyset \land \W = \emptyset.
\end{align*}

Now state transitions, called operations, are specified. Operations are predicates depending on at least one state variable. If state variable $X$ is changed during the transition its new value is denoted by $X'$.
Operations are given by specifying their pre- and post-conditions as \setlog formulas.
The first operation we show corresponds to a model where read access is granted only if simple security and *-property are preserved.
This is called \emph{*-property read}, 
denoted
here $\strRead$, and corresponding to \textit{xR*} in Fig.~\ref{fig:explicit-rules} (it also incorporates \textit{xR}$\bot$).
\begin{align*}
& \operation(\strRead). \\
& \strRead(\Xo,\Yo,L,\N,\W,S,O,\N') \text{ :- } \\
& \quad (S,(O,(C,D))) \notin N \\
& \quad\land \ApplyTo(L,O,(C,D)) \\
& \quad\land \Foreach (S_1,(O_1,(C_1,D_1))) \in \N: \tag{$pre_{ss}$}\\
& \quad\qquad S_1 = S \implies (C_1 \neq C \lor D_1 = D) \\
& \quad\land (D = \Yo \tag{$pre_{sp}$} \\
& \quad\qquad\lor \Foreach (S_1,(O_1,(C_1,D_1))) \in \W: \\
& \quad\qquad\qquad S_1 = S \implies D_1 = D) \\
& \quad\land \N' = \{(S,(O,(C,D)))~/~\N\}. \tag{$post$}
\end{align*}
In the above, $\strRead$ takes $\Xo$, $\Yo$, the state variables, a subject ($S$) and an object ($O$), and returns $N'$, i.e. the new value of $N$. All but the last line are pre-conditions. The first precondition ensures that $S$ has not opened $O$ for reading. The second pre-condition states that the security label of $O$ is $(C,D)$ by using the \setlog constraint $\ApplyTo$. Pre-condition $pre_{ss}$ checks the simple security rule (Def.~\ref{def:ss}).
Pre-condition 
$pre_{sp}$ is necessary to preserve the *-property.
It ensures that $O$ is a sanitized object or that the dataset of the objects that $S$ has write access to coincides with the dataset of $O$.
If all these hold, then $(S,(O,(C,D)))$ is added to $\N$ by means of an extensional set constructor available in \setlog. In effect, since $\{X~/~A\}$ is interpreted as $\{X\} \cup A$, then $\N'$ is equal to $\N$ plus the ordered pair in question. Given that $\strRead$ grants read permission, $\W'$ is not included as an argument, thus $\W = \W'$.

The \setlog code includes two more variants of the read operation. In one of
them, called 
 $\wRead$, the $pre_{sp}$ pre-condition
is not present. 
Operation $\wRead$ can result in a conflict of interests,
as illustrated in 
Figure \ref{fig:eg:invalid-if}.
Notice that the
initial accesses of $\szero$ and $\sone$ respect the simple security invariant,
 since $\ozero$ and $\otwo$ are in different CoI classes.
After $\szero$ requests this excessively ``weak'' read access to $\otwo$
(since only the simple security pre-condition holds) $\szero$ can access privileged
information stored in $\ozero$ which should not be allowed as $\szero$ already gained
access to $\oone$. The access to privileged information is facilitated by $\sone$ as
it may transfer information from $\ozero$ and store it into $\otwo$.
%
Such shortcomings are identified by \setlog when reporting that $\wRead$ does not
preserve $\starProp$
when attempting its
mechanised verification.\footnote{Guidelines on how to reproduce the automated
verification are provided in the replication package~\cite{rep-package}.}


\begin{figure}[h!]
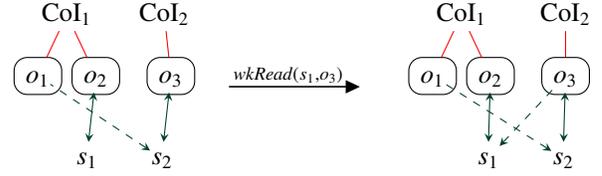

\[
\!\!\!\!\!\!
  \begin{array}{c}
  \begin{array}{c@{}c}    
    \vczero0 & \vcone1
    \\[1em]
  \begin{array}{c@{~}c}
  \vmod0{ ~\vozero0~  } &
  \vmod1{  ~\voone1~ }
  \end{array}
  &
  \begin{array}{c@{}c}
  \vmod2{ ~\votwo2~  }
  \end{array}
  \end{array}
  \\[3em]
  \qquad\vszero0
  \qquad\vsone1 
  \end{array}
  \Wedges{sone1/otwo2}
  \Nedges{sone1/ozero0}
  \Wedges{szero0/oone1}
  \Redges{czero0/mod0,czero0/mod1,cone1/mod2}
\lts{wkRead({\szero},{\otwo})}
  \begin{array}{c}
  \begin{array}{c@{}c}    
    \vczero0 & \vcone1
    \\[1em]
  \begin{array}{c@{~}c}
  \vmod0{ ~\vozero0~  } &
  \vmod1{  ~\voone1~ }
  \end{array}
  &
  \begin{array}{c}
  \vmod2{ ~\votwo2~  }
  \end{array}
  \end{array}
  \\[3em]
  \qquad\vszero0
  \qquad\vsone1 
  \end{array}
  \Wedges{sone1/otwo2}
  \Nedges{szero0/otwo2}
  \Nedges{sone1/ozero0}
  \Wedges{szero0/oone1}
  \Redges{czero0/mod0,czero0/mod1,cone1/mod2}
\]
\caption{
A conflict of interests resulting from executing the $\wRead$ operation.
}
\label{fig:eg:invalid-if}
\end{figure}

A variant of read that does preserve our invariants,
 called \emph{revoke read}, named $\rRead$ in the \setlog code and \textit{xR} in Fig.~\ref{fig:explicit-rules},
does not
contain $pre_{sp}$, as $\wRead$.
Instead, however, it updates $\W$ by revoking all the write
accesses of $S$ to objects whose dataset is different from the dataset of $O$. The update on
$\W$ is specified as follows.
\begin{align*}
& (D = \Yo \land W' = W \\
& \lor D \neq \Yo \\
& \quad \land \Diff(\W,\{(S_1,(O_1,(C_1,D_1))) \in \W \mid S_1 = S \land D_1 \neq D\},W'))
\end{align*}
Above, $\Diff(A,B,C)$ is a \setlog constraint interpreted as $C = A \setminus B$. $W'$ must be an argument of $\rRead$.
The condition $D \neq \Yo$ ensures that reading sanitized data does not result in write access being revoked. 

The specification of the operation \textit{xW} in Fig.~\ref{fig:explicit-rules},
called $\rqWrite$ in \setlog, is as follows.
\begin{align*}
& \operation(\rqWrite). \\
& \rqWrite(\Yo,L,\N,\W,S,O,\W') \text{ :- } \\
& \quad (S,(O,(C,D))) \notin W \\
& \quad\land \ApplyTo(L,O,(C,D)) \\
& \quad\land \Foreach (S_1,(O_1,(C_1,D_1))) \in \N: \tag{$pre_{sp}^w$} \\
& \quad\qquad S_1 = S \implies (D_1 = D \lor D_1 = \Yo) \\
& \quad\land \W' = \{(S,(O,(C,D)))~/~\W\}.
\end{align*}
In order to ensure that *-property is preserved after $\rqWrite$, $pre_{sp}^w$
checks that the objects already opened in read mode by $S$ belong to the dataset of $O$ or all of them contain only sanitized information.

There exists a second variant of the write operation, called \emph{read-write}, written $\readWrite$,
covering both \textit{xRW} and \textit{xRW}$\bot$ in Fig.~\ref{fig:explicit-rules}. While this operation ensures that *-property is
preserved similarly as $\rqWrite$, it also updates $\N$ and $\W$. The new
request---i.e. $(S,(O,(C,D)))$--- is added to $\N$ while $\W$ is updated either by
revoking all the write accesses of $S$ to objects whose dataset is different
from the dataset of $O$---as done in operation $\rRead$ shown earlier---or by adding the request when $S$ is accessing only sanitized information.

Once the operations have been given, the VCG is run thus generating a new file containing a number of VCs. Among the most important VCs are the so-called \emph{invariance lemmas}. An invariance lemma is a VC of the form $I \land T \implies I'$, where $I$ is an invariant, $T$ an operation and $I'$ is $I[\forall v \in st(I): v \mapsto v']$ with $st(I)$ the set of state variables of $I$. Informally, an invariance lemma states that if an invariant holds in some state and an operation is executed, the invariant holds in the next state. In other words, the invariant is \emph{preserved} by the operation. Given that \setlog is a satisfiability solver, invariance lemmas generated by the VCG take a negated form:
$\lnot(I \land T \implies I')$.
Hence, if \setlog determines that the above is unsatisfiable, the inner formula is a theorem. The VCG generates an invariance lemma for each invariant and operation. For example:
\begin{align*}
& \lnot(\starProp(\Yo,\N,\W) \land 
\rqWrite(\Yo,L,\N,\W,S,O,\W') \\
& \quad\implies \starProp(\Yo,\N,\W'))
\end{align*}
Note that in the consequent $\N$ appears rather than $\N'$, since $\N$ remains unchanged in $\rqWrite$ 
($N'$ is not one of its arguments).

Since an invariance lemma trivially holds if either the invariant or the
operation are unsatisfiable, \setlog also generates VCs ensuring that the
initial state satisfies every invariant and that all operations are
satisfiable.
The VCG generates also a predicate calling all the VCs. Then, when the user runs this predicate \setlog attempts to discharge all the VCs.

Besides the standard VCs concerning the verification of state machines, \setlog
users can define their own VCs in the form of clauses declared as
$\textsf{theorem}$. As with invariance lemmas, \setlog theorems have to be written in
negated form. Each such declaration is included by the VCG as a VC. User-defined
theorems have been used to prove, for instance, Theorems~\ref{thm1} and~\ref{thm2}. 
For Theorem~\ref{thm2} we first define a clause with its consequent
($t2$) and then we declare a theorem ($theorem2$) where $\simpSec$ is the
hypothesis required to prove $t2$. We can use $\simpSec$ as an hypothesis because
we have proved that it is a state invariant.
The fact that $\simpSec$ is enough to prove these theorems shows the importance
of finding the right invariants for a model and, more specifically, the
importance of $\simpSec$ and $\starProp$ in this context. This is further
stressed in Sec.~\ref{sec:if}.

\section{Information flow}\label{sec:if}

A goal of the Brewer-Nash policy model is to ensure that sensitive information
flows within its intended context.
This section explains the mechanisation of Theorem~\ref{thm4} 
that establishes confidentiality with respect to certain information flows resulting from read and write operations.
We formalise information flow and prove in Coq the confidentiality property expressed by Brewer \& Nash in Theorem~\ref{thm4}, using properties automatically discharged by \setlog.
Properties that \setlog discharges
include that (1) the *-property is an invariant,
(2) read access monotonically increases.
We settle for a definition of information flow based on 
Kessler's~\cite{Kessler1992}, the earliest definition of information flow in the context of Chinese Wall policies 
that is precise enough for our purposes.

\subsection{Defining information flows}

When defining information flow we make use of big-step labelled transitions,
that perform zero or more transitions 
before the given label occurs.
\begin{definition}[big-step transition]
$N_0, W_0 \Lts{\alpha_n} N_{n+1}, W_{n+1}$
whenever there exists $N_1, W_1, \ldots N_{n}, W_{n}$
and
for all $i\in [0\ldots n]$,
$N_i, W_i \lts{\alpha_i} N_{i+1}, W_{i+1}$ (according to Fig.~\ref{fig:explicit-rules}).
\end{definition}

We can now express formally a suitable notion of information flow,
inspired by Kessler~\cite{Kessler1992}.

\begin{definition}[information flow]\label{def:if}
Starting in state $N_1, W_1$,
information can flow from object $o_1$ to object $o_{n+1}$, written $o_1 \flow o_{n+1}$ whenever:
\begin{align*}
& \exists s_1,\dots,s_n; o_2,\dots,o_{n}; N_2,\dots,N_{2n+1}; W_2,\dots,W_{2n+1}: \\
& \quad \forall i : 1 \leq i \leq n \implies \\
& \quad\qquad
  N_{2i-1},W_{2i-1} \Lts{\Read{s_{i}}{o_{i}}} N_{2i},W_{2i} 
  \Lts{\Write{s_i}{o_{i+1}}} N_{2i+1},W_{2i+1} 
\end{align*}
\end{definition}
The above defines a 
sequence of read and write operations
permitted by the policy starting from the given state.
The sequence starts by a subject reading from the initial object $o_1$
and reflects the possibility that any subsequent write operation by that subject is possibly influenced by information in $o_1$.
Any other subject that reads an object written to by $s_1$ may in turn be influenced by $o_1$ (even if the influence is inadvertent),
and hence any object they later write to 
may be influenced by confidential data in $o_1$.
Clearly, there can be many such flows starting in a given state.

The use of big-step transitions in the definition of a flow permits some operations that are not contributing directly to the given flow to occur in between operations that do contribute to the flow.
In this way, we range over arbitrary sequences of reads and writes containing a flow in Definition~\ref{def:if}.

\begin{figure}
\[
\begin{array}{rl}
\begin{array}{c}
  \begin{array}{c@{\;}c}    
    \vczero0 
    & \vcone1
    \\[1em]
  \vmod0{ ~\vozero0~ }
  &
  \vmod1{ ~\voone1\quad\votwo2\quad\vothree3~ }
  \end{array}
  \\[2em]
  \begin{array}{c@{\qquad\qquad}c}
  \vszero0
  &
  \vsone1
  \end{array}
\end{array}
  \Nedges{szero0/oone1,sone1/otwo2}
  \Wedges{szero0/otwo2}  
  \Redges{czero0/mod0,cone1/mod1}
\lts{\Read{\szero}{\ozero}} &
\begin{array}{c}
  \begin{array}{c@{\;}c}    
    \vczero0 
    & \vcone1
    \\[1em]
  \vmod0{ ~\vozero0~ }
  &
  \vmod1{ ~\voone1\quad\votwo2\quad\vothree3~ }
  \end{array}
  \\[2em]
  \begin{array}{c@{\qquad\qquad}c}
  \vszero0
  &
  \vsone1
  \end{array}
\end{array}
  \Nedges{szero0/otwo2,szero0/ozero0,szero0/oone1,sone1/otwo2}
  \Wedges{}  
  \Redges{czero0/mod0,cone1/mod1}
  \\
\lts{\Write{\sone}{\othree}}&
\begin{array}{c}
  \begin{array}{c@{\;}c}    
    \vczero0 
    & \vcone1
    \\[1em]
  \vmod0{ ~\vozero0~ }
  &
  \vmod1{ ~\voone1\quad\votwo2\quad\vothree3~ }
  \end{array}
  \\[2em]
  \begin{array}{c@{\qquad\qquad}c}
  \vszero0
  &
  \vsone1
  \end{array}
\end{array}
  \Nedges{szero0/otwo2,szero0/ozero0,szero0/oone1,sone1/otwo2}
  \Wedges{sone1/othree3}  
  \Redges{czero0/mod0,cone1/mod1}
\end{array}
\]
\caption{
Part of a flow between $\oone$ and $\othree$.
Prior steps can establish $\oone \leadsto \otwo$, via $\szero$.
Action $\Write{\sone}{\othree}$ completes the flow $\oone \leadsto \othree$, despite 
write access to $\otwo$ being revoked, since $\otwo$ may already be influenced by $\oone$.
}\label{fig:eg:transitive}
\end{figure}

For an example of a flow,
consider part of an 
information flow between $\oone$ and $\othree$
in Fig.~\ref{fig:eg:transitive}.
Beginning in state $\emptyset,\emptyset$,
the flow in question can be enabled by the following small step transitions. 
\[
\!
\begin{array}{rl}
\lts{\Read{\szero}{\oone}}
&
\left\{ (\szero,\oone) \right\}, \emptyset
\\
\lts{\Write{\szero}{\otwo}}
&
\left\{ (\szero,\oone), (\szero,\otwo) \right\},
\left\{ (\szero,\otwo) \right\}
\\
\lts{\Read{\sone}{\otwo}}
&
\left\{ (\szero,\oone), (\szero,\otwo), (\sone,\otwo) \right\},
\left\{ (\szero,\otwo) \right\} \text{\hspace{1.7cm}($\dagger$)}
\\
\lts{\Read{\szero}{\ozero}} &
\left\{ (\szero,\ozero), (\szero,\oone), (\sone,\oone), (\sone,\otwo) \right\},
\emptyset
\\
\lts{\Write{\sone}{\othree}} &
\left\{ (\szero,\ozero), (\szero,\oone), (\szero,\otwo), (\sone,\otwo), (\sone,\othree) \right\},
\left\{ (\sone,\othree) \right\}
\end{array}
\]
In the above, the state marked with ($\dagger$)
corresponds to the left hand side of Fig~\ref{fig:eg:transitive}.
The read operation following that state, also appearing in the figure,
is not active in the information flow under scrutiny,
since it follows a read.
It should be considered as part of a big step transition
comprising the final two operations together, where only the second is part of this particular flow.
That is, the final two operations above correspond to the big-step transition.
\[
\begin{array}{l}
\left\{ (\szero,\oone), (\szero,\otwo), (\sone,\otwo) \right\},
\left\{ (\szero,\otwo) \right\}
\\
\Lts{\Write{\sone}{\othree}} 
\left\{ (\szero,\ozero), (\szero,\oone), (\szero,\otwo), (\sone,\otwo),(\sone,\othree) \right\},
\left\{ (\sone,\othree) \right\}
\end{array}
\]
Notice that there is not a state of the system where all the read and write operations
between $\oone$ and $\othree$ are simultaneously enabled.
In contrast, the transitions in Fig.\ref{fig:eg:intransitive} 
are not part of a flow from $\ozero$ to $\otwo$.
This is because write access to $\otwo$ 
is revoked before $\ozero$ is read.
These two examples help explaining why the rich notion adopted from Kessler is appropriate for modelling information flow.

\begin{figure}
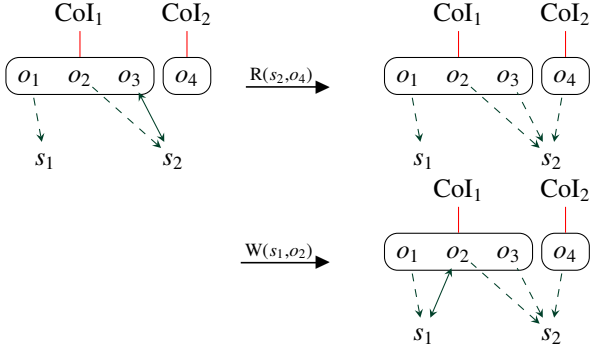

\[
\begin{array}{rl}
\begin{array}{c}
  \begin{array}{c@{\;}c}    
    \vczero0 
    & \vcone1
    \\[1em]
  \vmod0{ ~\vozero0\quad\voone1\quad\votwo2~ }
  &
  \vmod1{ ~\vothree3~ }
  \end{array}
  \\[3em]
  \begin{array}{c@{\qquad\qquad}c}
  \vszero0
  &
  \vsone1
  \end{array}
\end{array}
  \Nedges{szero0/ozero0,sone1/oone1}
  \Wedges{sone1/otwo2}  
  \Redges{czero0/mod0,cone1/mod1}
\lts{\Read{\sone}{\othree}}&
\begin{array}{c}
  \begin{array}{c@{\;}c}    
    \vczero0 
    & \vcone1
    \\[1em]
  \vmod0{ ~\vozero0\quad\voone1\quad\votwo2~ }
  &
  \vmod1{ ~\vothree3~ }
  \end{array}
  \\[3em]
  \begin{array}{c@{\qquad\qquad}c}
  \vszero0
  &
  \vsone1
  \end{array}
\end{array}
  \Nedges{sone1/otwo2,szero0/ozero0,sone1/oone1,sone1/othree3}
  \Redges{czero0/mod0,cone1/mod1}
  \\
\lts{\Write{\szero}{\oone}}&
\begin{array}{c}
  \begin{array}{c@{\;}c}    
    \vczero0 
    & \vcone1
    \\[1em]
  \vmod0{ ~\vozero0\quad\voone1\quad\votwo2~ }
  &
  \vmod1{ ~\vothree3~ }
  \end{array}
  \\[3em]
  \begin{array}{c@{\qquad\qquad}c}
  \vszero0
  &
  \vsone1
  \end{array}
\end{array}
  \Nedges{sone1/otwo2,szero0/ozero0,sone1/oone1,sone1/othree3}
  \Wedges{szero0/oone1}
  \Redges{czero0/mod0,cone1/mod1}
\end{array}
\]
\caption{
These transitions do not result in a flow from $\ozero$ to $\otwo$,
since write access to $\otwo$ is revoked before information about $\ozero$ flows via $\sone$.  
}\label{fig:eg:intransitive}
\end{figure}

\subsection{Mechanising confidentiality via \setlog and Coq}

Now that we have the missing ingredients to define Theorem~\ref{thm4},
we can proceed to mechanise the proof using \setlog and Coq.
The following recalls $\starProp$ from Sec.~\ref{sec:model}.
\begin{definition}\label{def:star}
For state $N, W$, we have
$*(N,W)$ whenever,
for all $(s,o) \in W$,
$s$, $o$ and $N$ satisfy the *-property (Def.~\ref{def:sp}).
\end{definition}
We already mentioned that the above is proven to be invariant in \setlog.
To be precise, the following expresses what is mechanised in \setlog.
\begin{lemma}
\label{lemma:setlog-star}
If $N, W \lts{\alpha} N', W'$
then $*(N, W) \Rightarrow *(N', W')$.
Furthermore, $N \subseteq N'$ and 
$\alpha = \Read{s}{o} \Rightarrow (s,o) \in N'$
and
$\alpha = \Write{s}{o} \Rightarrow (s,o) \in W'$.
Also, $*(\emptyset,\emptyset)$ holds.
\end{lemma}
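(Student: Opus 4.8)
The plan is to prove each conjunct by a case analysis over the eight operational rules of Fig.~\ref{fig:explicit-rules} that could have produced the transition $N, W \lts{\alpha} N', W'$. The three auxiliary facts are immediate by inspection: every rule sets either $N' = N$ or $N' = N \cup \{(s,o)\}$, so $N \subseteq N'$ always; each read rule (\textit{xR}, \textit{xR}$\bot$, \textit{xR*}, \textit{mR}) places $(s,o)$ in $N'$ — explicitly for the first three, and via the premise $(s,o) \in N$ for \textit{mR} — giving $\alpha = \Read{s}{o} \Rightarrow (s,o) \in N'$; and symmetrically each write rule yields $(s,o) \in W'$. The base case $*(\emptyset, \emptyset)$ holds vacuously, since the empty write matrix contains no pairs.

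The substance is preservation of the *-property. I would fix an arbitrary pair $(s_1, o_1) \in W'$ and show that for every $o''$ with $(s_1, o'') \in N'$ we have $\ds{o_1} = \ds{o''}$ or $\ds{o''} = \botDS$, splitting on whether $(s_1,o_1)$ is a pre-existing pair of $W$ or the freshly granted pair $(s,o)$, and on whether $o''$ lies in the old $N$ or is the freshly read object. For the freshly granted write pair $(s,o)$ (rules \textit{xW}, \textit{xRW}, \textit{xRW}$\bot$, \textit{mW}), the rule's own write premise — namely $\forall o' \colon (s,o') \in N \implies (\ds{o'} = \ds{o} \vee \ds{o'} = \botDS)$, or its sanitized variant with $\ds{o} = \botDS$ — is exactly the *-property for that pair against the old $N$, and whenever the same rule also enlarges $N$ it does so with the same object $o$, for which $\ds{o} = \ds{o}$ trivially. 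For pre-existing write pairs confronted with old read objects, the property is inherited directly from the hypothesis $*(N,W)$.

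The only genuinely delicate combination is a pre-existing write pair $(s, o_1) \in W$ confronted with the freshly read object $o$; this is the case I expect to be the crux, since it is where the design of the read rules is exactly what keeps the invariant. In the revoking reads \textit{xR} and \textit{xRW}, the update $W \setminus \{(s,o') \colon \ds{o'} \neq \ds{o}\}$ removes precisely those write pairs whose dataset differs from $\ds{o}$, so any surviving $(s,o_1)$ has $\ds{o_1} = \ds{o}$, discharging the new obligation. In the sanitized read \textit{xR}$\bot$ the freshly read $o$ satisfies $\ds{o} = \botDS$, so the obligation holds on the right disjunct without revoking anything. In the restrictive read \textit{xR*} write access is likewise left untouched, but the extra premise $(s,o') \in W \implies \ds{o} = \ds{o'}$ guarantees that every retained write pair already agrees with $o$'s dataset — this is exactly the purpose of the ``*'' check, and verifying that this check is strong enough (rather than too weak) is the main thing to confirm.

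Finally, rules \textit{mR} and \textit{mW} leave both matrices unchanged, so preservation is trivial there. Assembling all eight cases yields $*(N,W) \Rightarrow *(N',W')$, completing the proof. I would remark that none of the *-property cases actually invoke Axioms~\eqref{ax1} or~\eqref{ax2}, since they reason purely about datasets and $\botDS$; the simple-security premises appearing in the read rules are needed for Theorem~\ref{thm1}, not for the invariant established here.
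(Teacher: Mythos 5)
Your proof is correct: the per-rule case analysis is sound, the three auxiliary conjuncts and the base case are indeed immediate, and you correctly isolate the one delicate case (a surviving write pair of the reading subject confronted with the freshly read object) and discharge it via revocation in \textit{xR}/\textit{xRW}, via the $\botDS$ disjunct in \textit{xR}$\bot$, and via the extra premise in \textit{xR*}. However, your route differs in kind from the paper's, which contains no pen-and-paper proof of this lemma at all: Lemma~\ref{lemma:setlog-star} is stated precisely as a record of what is discharged automatically by \setlog. There, the verification condition generator emits, for each operation, a negated invariance lemma of the form $\lnot(I \land T \implies I')$, and \setlog settles each by deciding unsatisfiability; the monotonicity and membership conjuncts are likewise separate \setlog theorems (the \textit{lemma1} entries of Table~\ref{table:1}). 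Your hand proof is essentially a human-readable rendition of those same per-operation verification conditions --- the decomposition into one obligation per rule of Fig.~\ref{fig:explicit-rules} is the same, up to the bundling of the $\bot$-variants into single \setlog operations (e.g., $\strRead$ covers both \textit{xR*} and \textit{xR}$\bot$) and the omission in \setlog of the trivial lookup rules \textit{mR}/\textit{mW}. What the paper's approach buys is machine-checked assurance with no manual reasoning, which is the point of its methodology, plus the ability to expose failures: \setlog reports that the deliberately weakened $\wRead$ operation does \emph{not} preserve $\starProp$ (Fig.~\ref{fig:eg:invalid-if}). What your approach buys is explanatory insight: it makes visible \emph{why} each rule preserves the invariant, and it yields the genuinely useful observation --- nowhere stated in the paper --- that Axioms~\eqref{ax1} and~\eqref{ax2} play no role in this particular invariant.
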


We then observe that big-step transitions also preserve the properties 
that we guaranteed in Lemma~\ref{lemma:setlog-star}
\begin{corollary}\label{lemma:axioms}
If $N, W \Lts{\alpha} N', W'$
then $*(N, W) \Rightarrow *(N', W')$.
Furthermore, $N \subseteq N'$ and 
$\alpha = \Read{s}{o} \Rightarrow (s,o) \in N'$
and
$\alpha = \Write{s}{o} \Rightarrow (s,o) \in W'$.
\end{corollary}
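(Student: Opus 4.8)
The plan is to reduce the four big-step guarantees to the corresponding single-step guarantees of Lemma~\ref{lemma:setlog-star} by induction on the length of the underlying chain of small steps. First I would unfold the definition of the big-step transition: a derivation of $N, W \Lts{\alpha} N', W'$ supplies states $N_0, W_0, \ldots, N_{n+1}, W_{n+1}$ with $N_0, W_0 = N, W$ and $N_{n+1}, W_{n+1} = N', W'$, a sequence of labels $\alpha_0, \ldots, \alpha_n$ with $\alpha = \alpha_n$, and small steps $N_i, W_i \lts{\alpha_i} N_{i+1}, W_{i+1}$ for every $i \in [0 \ldots n]$.

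For the invariant-style claims (star-preservation and $N \subseteq N'$), the key observation is that these properties compose along the chain. I would show by a straightforward induction on $i$ that $*(N_0, W_0) \Rightarrow *(N_i, W_i)$: the base case $i = 0$ is immediate, and each inductive step applies the single-step preservation clause of Lemma~\ref{lemma:setlog-star} to $N_i, W_i \lts{\alpha_i} N_{i+1}, W_{i+1}$. Taking $i = n+1$ yields $*(N, W) \Rightarrow *(N', W')$. Monotonicity is even simpler: Lemma~\ref{lemma:setlog-star} gives $N_i \subseteq N_{i+1}$ at every step, so transitivity of $\subseteq$ delivers $N = N_0 \subseteq N_{n+1} = N'$.

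The two membership claims require no induction at all, precisely because the observed label $\alpha$ is by definition the label $\alpha_n$ of the final small step $N_n, W_n \lts{\alpha_n} N_{n+1}, W_{n+1}$. Applying Lemma~\ref{lemma:setlog-star} to that single step: if $\alpha = \Read{s}{o}$ then $(s,o) \in N_{n+1} = N'$, and if $\alpha = \Write{s}{o}$ then $(s,o) \in W_{n+1} = W'$.

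I do not expect any genuine obstacle here; the only point demanding care is the bookkeeping around the definition of big-step transitions, in particular recognising that the preservation and monotonicity properties must be threaded through every intermediate state, whereas the two membership properties depend solely on the terminal step and so follow directly. One minor edge case worth checking is the shortest derivation ($n = 0$), where the big step collapses to a single small step and every clause reduces immediately to Lemma~\ref{lemma:setlog-star}.
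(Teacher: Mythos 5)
Your proof is correct and takes essentially the same approach as the paper's: induction on the length of the chain of small steps, applying Lemma~\ref{lemma:setlog-star} at each step, with the membership claims read off the final step whose label is the big-step label. The only cosmetic difference is that the paper runs a single induction (peeling off the last step) carrying all four claims at once, whereas you thread only the compositional claims (star-preservation and $N \subseteq N'$) through the induction and observe, correctly, that the membership claims need no induction at all.
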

\begin{proof}
Consider $N_0, W_0 \Lts{\alpha_n} N_{n+1}, W_{n+1}$,
and proceed by induction on the number of one-step transitions.
In the base case, there is a one-step transition, and hence the result follows immediately from Lemma~\ref{lemma:setlog-star}.
Consider the inductive case where 
$N_0,W_0 \Lts{\alpha_n} N_{n+1},W_{n+1}$
and $N_{n+1},W_{n+1} \lts{\alpha_{n+1}} N_{n+2},W_{n+2}$.
By the induction hypothesis, we have that $N_0 \subseteq N_{n+1}$
and $*(N_{0},W_{0}) \Rightarrow *(N_{n+1},W_{n+1})$.
Furthermore, by Lemma~\ref{lemma:setlog-star}, we have
$N_{n+1} \subseteq N_{n+2}$
and
$*(N_{n+1},W_{n+1}) \Rightarrow *(N_{n+2},W_{n+2})$,
and 
$\alpha_{n+1} = \Read{s}{o} \Rightarrow (s,o) \in N_{n+2}$
and
$\alpha = \Write{s}{o} \Rightarrow (s,o) \in W'$.
Thus 
$N_{0} \subseteq N_{n+2}$
and
$*(N_{0},W_{0}) \Rightarrow *(N_{n+2},W_{n+2})$, as required.
\end{proof}

Having introduced these preliminaries, we can prove Theorem~\ref{thm4}.
As explained in Sec.~\ref{sec:bridge},
the confidentiality 
property targeted by Brewer \& Nash 
essentially says that along any flow,
either the information flowing is sanitized
or it stays within the same dataset.
We use the Coq proof assistant 
to mechanise the
proof of the following intermediate theorem,
that proves that, in states where everything in $W$ satisfies the *-property,
the consequent of Theorem~\ref{thm4} holds.
Hence, since we have a mechanised proof that
the premise of Theorem~\ref{thm:Ricardo}
is an invariant in \setlog, then 
Theorem~\ref{thm4} follows immediately from 
Lemma~\ref{lemma:setlog-star} and Theorem~\ref{thm:Ricardo}.
\begin{theorem}\label{thm:Ricardo}
Consider any state $N_1, W_1$ such that $*(N_1,W_1)$.
If $o$ and $o'$ are objects,
then 
if $o \leadsto o'$ starts in $N_1, W_1$, then either $\ds{o} = \botDS$ or $\ds{o} = \ds{o'}$.
\end{theorem}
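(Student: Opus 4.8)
The plan is to unfold Definition~\ref{def:if} and then lean on the two facts recorded in Corollary~\ref{lemma:axioms}: that big-step transitions preserve the *-property, and that they only ever accumulate read accesses in $N$. Writing $o \flow o'$ as $o = o_1$, $o' = o_{n+1}$ with witnessing subjects $s_1,\dots,s_n$, intermediate objects $o_2,\dots,o_n$, and the chain of big steps $N_1,W_1 \Lts{\Read{s_1}{o_1}} N_2,W_2 \Lts{\Write{s_1}{o_2}} N_3,W_3 \Lts{\Read{s_2}{o_2}} \cdots$, I would first argue that $*(N_k,W_k)$ holds at \emph{every} state along this chain. This is a short induction: $*(N_1,W_1)$ holds by hypothesis, and each big step preserves it by Corollary~\ref{lemma:axioms}, so the property propagates from the initial state through to $N_{2n+1},W_{2n+1}$.

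Next I would localise the *-property to each contributing pair of operations. Fix $i$ with $1 \le i \le n$. The read big step $N_{2i-1},W_{2i-1} \Lts{\Read{s_i}{o_i}} N_{2i},W_{2i}$ gives $(s_i,o_i) \in N_{2i}$, and the subsequent write big step $N_{2i},W_{2i} \Lts{\Write{s_i}{o_{i+1}}} N_{2i+1},W_{2i+1}$ gives $(s_i,o_{i+1}) \in W_{2i+1}$; monotonicity of $N$ yields $(s_i,o_i) \in N_{2i+1}$ as well. Instantiating $*(N_{2i+1},W_{2i+1})$ (Definition~\ref{def:star}) at the write pair $(s_i,o_{i+1}) \in W_{2i+1}$ with the read object $o_i$ then delivers, via Definition~\ref{def:sp}, the local constraint $\ds{o_i} = \ds{o_{i+1}}$ or $\ds{o_i} = \botDS$.

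Finally I would chain these $n$ local constraints into the global statement. If $\ds{o_1} = \botDS$ we land immediately in the first disjunct of the theorem. Otherwise, a routine induction on $j$ shows $\ds{o_j} = \ds{o_1}$ for all $1 \le j \le n+1$: assuming $\ds{o_j} = \ds{o_1} \neq \botDS$, the $j$-th local constraint forces $\ds{o_{j+1}} = \ds{o_j}$, since the $\botDS$ alternative is excluded. Taking $j = n+1$ gives $\ds{o_{n+1}} = \ds{o_1}$, that is $\ds{o} = \ds{o'}$, as required.

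The main obstacle I anticipate is the bookkeeping in the middle step: one must arrange for \emph{both} $(s_i,o_i) \in N$ and $(s_i,o_{i+1}) \in W$ to hold simultaneously in a single state at which the *-property is known to hold. This alignment is precisely what the clause $N \subseteq N'$ of Lemma~\ref{lemma:setlog-star}, lifted to big steps in Corollary~\ref{lemma:axioms}, provides --- were read access revocable, the read pair established before the write might have vanished by the time the write access is recorded, and the local constraint could not be extracted. Once the non-revocability of $N$ is in hand, the remaining inductions are entirely standard finite arguments.
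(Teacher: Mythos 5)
Your proposal is correct and follows essentially the same route as the paper's proof: both rely on Corollary~\ref{lemma:axioms} to propagate the *-property along the chain and to obtain $(s_i,o_i)\in N_{2i+1}$ (via monotonicity of $N$) together with $(s_i,o_{i+1})\in W_{2i+1}$, and then instantiate Definitions~\ref{def:star} and~\ref{def:sp} to extract the per-link constraint $\ds{o_i}=\ds{o_{i+1}} \vee \ds{o_i}=\botDS$. The only difference is organisational: the paper folds everything into a single induction on the flow length $n$, extracting the last link's constraint in the inductive step, whereas you first derive all $n$ local constraints and then chain them with a separate finite induction --- the key lemmas and instantiations are identical.
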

\begin{proof}
Assume that $*(N_1, W_1)$
and also assume that $o$ and $o'$ are objects
such that $o \leadsto o'$ starts in $N_1, W_1$.

By Definition~\ref{def:if}, since $o \leadsto o'$, we have some $n$ such that
$s_1,\dots,s_n$; $o_1,\dots,o_{n+1}$; $N_1,\dots,N_{2n+1}$; and $W_1,\dots,W_{2n+1}$ 
such that $N_{2i-1},W_{2i-1} \Lts{\Read{s_{i}}{o_{i}}} N_{2i},W_{2i} \Lts{\Write{s_i}{o_{i+1}}} N_{2i+1},W_{2i+1}$,
and $o = o_1$ and $o' = o_{n+1}$.

We then proceed by induction on $n$.
\begin{itemize}
\item \textsc{Base case, $n = 0$.} In this case $o \leadsto o$ and hence trivially $\ds{o} = \ds{o}$, as required.
%
%
\item \textsc{Induction hypothesis.} For for $n = k$ 
we have if $o \leadsto o_{k+1}$ then either $\ds{o} = \ds{o_{k+1}}$ or $\ds{o} = \botDS$.
\item \textsc{Inductive case, $n = k + 1$.} 

Notice that $o \flow o'$ can be decomposed into $o \flow o_{k+1} \flow o'$, where $o' = o_{k+2}$.
In turn, $o \flow o_{k+1}$ is of length $k$ so \by{induction hypothesis}
$\ds{o} = \ds{o_{k+1}} \vee \ds{o} = \botDS$.
If $\ds{o} = \botDS$ we are done immediately, hence we consider next
when
$\ds{o} = \ds{o_{k+1}}$.


We now aim to establish that either $\ds{o_{k+1}} = \Yo$ or $\ds{o_{k+1}} = \ds{o_{k+2}}$ holds.
Now,
 by Corollary~\ref{lemma:axioms}, we have
$*(N_{2i-1}, W_{2i-1}) \Rightarrow *(N_{2i}, W_{2i})$
and $*(N_{2i}, W_{2i}) \Rightarrow *(N_{2i+1}, W_{2i+1})$,
for $1 \leq i \leq k+1$.
Consequently, by transitivity repeatedly, we have
 $*(N_1, W_1) \Rightarrow *(N_{2k+3},W_{2k+3})$
and,
since we assumed that $*(N_1, W_1)$ holds,
we have that $*(N_{2k+3},W_{2k+3})$ holds.
Furthermore,
also by Corollary~\ref{lemma:axioms}, we have
$N_{2k+2} \subseteq N_{2k+3}$ and
 $(s_{k+1}, o_{k+1}) \in N_{2k+2}$ and $(s_{k+1}, o_{k+2}) \in W_{2k+3}$.
Hence, since
$N_{2k+2} \subseteq N_{2k+3}$
and 
$(s_{k+1}, o_{k+1}) \in N_{2k+2}$, we have 
$(s_{k+1}, o_{k+1}) \in N_{2k+3}$.

Now, since $*(N_{2k+3},W_{2k+3})$, by Def.~\ref{def:star},
since we have $(s_{k+1}, o_{k+2}) \in W_{2k+3}$,
it must be that $s_{k+1}$, $o_{k+2}$, and $N_{2k+3}$ satisfy the *-property.
Thus, by Def.~\ref{def:sp}, we have that 
 $\ds{o_{k+1}} = \ds{o_{k+2}}$ or $\ds{o_{k+1}} = \botDS$ holds,
 since we have just established that $(s_{k+1}, o_{k+1}) \in N_{2k+3}$.
 
Since we have just established that $\ds{o_{k+1}} = \Yo$ or $\ds{o_{k+1}} = \ds{o_{k+2}}$ holds
and we are considering the case when $\ds{o} = \ds{o_{k+1}}$,
we have that either $\ds{o} = \ds{o_{k+2}}$ or $\ds{o} = \botDS$ holds, as required.
\end{itemize}
\vspace{-1em}
\end{proof}

\subsection{On the mechanisation of the proof}

We summarise here how the above proofs
are mechanised in the 
replication package accompanying this
article~\cite{rep-package}.
The proofs, in Coq, of 
Corollary~\ref{lemma:axioms}
and Theorem~\ref{thm:Ricardo}
are aligned with the
proofs shown above.
Both proofs are established by induction,
the former over the length of a big-step transition,
and the latter over the number of big-step transitions comprising an information flow.
We do not automate fully these proofs in \setlog since the process for casting inductive proofs in \setlog currently would comprise manually casting the inductive steps as sub-problems without formal guarantees that the inductive conclusion follows from those sub-problems.
The reliance of the proof of Theorem~\ref{thm:Ricardo} on Corollary~\ref{lemma:axioms} is achieved by assuming appropriate axioms in Coq.
Similarly, the reliance of the proof of Corollary~\ref{lemma:axioms} on Lemma~\ref{lemma:setlog-star} is achieved by assuming axioms in Coq.
The proofs of Theorem~\ref{thm:Ricardo} and Corollary\ref{lemma:axioms} have not been attempted in \setlog
since their statements are formulas that do not fit in any of the
fragments of set theory for which \setlog implements decision procedures.
Before using \setlog to prove those formulas, we need to prove some decidability
results about a fragment of set theory resulting from the
combination of a few of its decidable fragments.

\section{Minimum subjects to access all datasets, mechanized}\label{sec:min}

In this section, we achieve full mechanisation of all theorems originally posed by Brewer \& Nash.
The missing proof, of Theorem~\ref{thm3}, ensures that whenever all datasets in a CoIC are accessed then,
there are at least as many subjects in the system as datasets.
This is clearly a special case of a stronger theorem stating that,
for any CoIC,
the number of datasets that have been accessed in that CoIC
is no greater than the total number of subjects in the system.
The proof of this theorem can be automatically handled by \setlog, by using a few tricks which we explain next. 

Let $\CoIC$ be the set of all possible CoICs and let $\mathcal{S}$ be the set of subjects in the system. 
Also, define the set of all datasets of CoIC $X$ accessed in state $N$ as follows.
\begin{equation}\label{th:dnx}
 D^{N}_X = \{Y \mid \exists (s,o) \in N: \coi{o} = X \land \ds{o} = Y\}
\end{equation}
The strengthening of the consequent of Theorem~\ref{thm3} mentioned above can be formulated as follows.
\begin{equation}\label{th:miss}
\forall X \in \CoIC: 
 \lvert D^N_X \rvert \leq \lvert\mathcal{S}\rvert
\end{equation}
To see why proving that the above is an invariant establishes Theorem~\ref{thm3},
consider $\XV$ as defined in Sec.~\ref{sec:bridge}.
Observe that 
if,
for all datasets $Y \in \XV$,
there exists subject $s$ and object $o$ such that $(s, o) \in N$ and $\ds{o} = Y$,
then we also have that $\XV = D^N_X$.

As we know from set theory, comparing the cardinality of sets, as in Eq~\eqref{th:miss}
can be achieved by exhibiting a surjective partial function
from $\mathcal{S}$ to $D^N_X$.
We can construct such a surjective partial function as follows:
\begin{equation}
f^{N}_X = \{(s,Y) \mid \exists (s,o) \in \N: \coi{o} = X \land \ds{o} = Y\}
\end{equation}
Since $f^N_X$ is surjective,
 a given dataset $Y$ can be accessed by more that one subject but a subject cannot access more than one dataset in a CoIC. 
Furthermore, since the range covers all datasets in the CoIC that have been accessed, there must be at least as many subjects as datasets being accessed in the CoIC.
Thus it remains to check only the following, to ensure our construction is correct ($\Ran$ denotes the co-domain of a relation).
\begin{equation}\label{th:miss2}
\forall X \in {} \CoIC: 
 \Pfun(f^{N}_X)
\land \Ran(f^N_X) = D^N_X
\end{equation}
We use several tricks to achieve the effect of verifying Eq~\eqref{th:miss2} in \setlog.
Firstly, a new state variable, $\RR$, is added to the model that was introduced in Sec.~\ref{sec:model}. $\RR$ is a set of ordered pairs of the form $(X,f)$ where $X \in \CoIC$ and $f$ is a set of ordered pairs $(s,Y)$ where $s$ is a subject and $Y$ a dataset. 
Second, the following invariant is added to the model to ensure that each $f$ is a partial function.
\begin{align*}
& \invariant(\siRR). \\
& \siRR(\RR) \text{ :- } \Foreach (X,F) \in \RR : \Pfun(F). 
\end{align*}
Thirdly, $\RR$ is updated whenever a read access is requested, to ensure that $f$ is kept in step with $f^N_X$ above.
 For instance, in $\strRead$ the update is performed by conjoining the following (recall that $S$ is the subject requesting access to an object whose label is $(X,Y)$):
\begin{align*}
& \Dom(\RR,A) \\
& \land (~~X \in A \\
& \qquad\land \ApplyTo(\RR,X,R) \\
& \qquad\land \Oplus(\RR,\{(X,\{(S,Y)~/~R\})\},\RR') \\
& \quad\lor X \notin A \\
& \qquad\land \RR' = \{(X,\{(S,Y)\})~/~\RR\})
\end{align*}
That is, $\RR$ is updated depending on whether $X$ is already in $\RR$'s domain or not. In the first case, $(S,Y)$ is added to the image of $X$ through $\RR$\footnote{$\Oplus$ is a \setlog constraint interpreted as B's \emph{overriding} operator.}, whereas in the second case $(X,\{(S,Y)\})$ is added to $\RR$. In other words, $\RR$ is updated in such a way that every time a subject $s$ reads from a new object $o$ whose security label is $(X,Y)$, then $(s,Y)$ is added to the image of $X$ through $\RR$. Put it in other way, in any state, if $X$ is a CoIC, then $\RR(X)$ is the set of pairs $(s,Y)$ such that subject $s$ is reading from some object $o$ such that $L(o) = (X,Y)$.

Finally, two more invariants are included ensuring that $\N$ and $\RR$ are always aligned. That is, the first invariant states that if $(X,f)$ belongs to $\RR$ and $(S,Y)$ belongs to $f$, then there exists $(S,(O,\ell))$ in $\N$ such that $\ell = (X,Y)$. The \setlog code is the following.
\begin{align*}
& \invariant(align\_\RR\_\N). \\
& align\_\RR\_\N(\RR,\N) \text{ :- }	\\
& \quad	\forall (X,F) \in \RR; (S_1,Y) \in F: \\ 
& \qquad \exists (S_2,(O,(X_1,Y_1))) \in \N:
      		S_2 = S_1 \land X_1 = X \land Y = Y_1.
\end{align*}
The second invariant (namely $align\_\N\_\RR$) states the opposite inclusion---i.e., if an ordered pair is in $\N$, then there's a corresponding ordered pair in $\RR$.



\paragraph*{Discussion on scope}
We have now fully interpreted and mechanised the original work of Brewer \& Nash. Table~\ref{table:1} provides hints to the reader about what to search for in the replication package~\cite{rep-package} to know the actual implementation of the mechanisation.
It is natural, in the future, to consider more comprehensive indicators that conflicts-of-interest are avoided.
For example,
we could check that there is never a flow to a subject from two objects in different datasets but the same CoIC.
This is stronger than Theorem~\ref{thm3}, since
we should prove that indirect flows from objects to subjects are also mitigated 
(Theorem~\ref{thm4} concerns flows from objects to objects).
The scope of the current paper however is complete, since we aimed 
to clarify and mechanise the original Brewer-Nash model.


\begin{table}[h]
\renewcommand{\arraystretch}{1.3}
\centering
\caption{Binding between theorem name and the counterparts used to achieve its automated verification}

\label{table:1}
\begin{tabular}{c|c|c}
\hline
\textbf{B\&N Theorem} & \textbf{Implemented by} & \textbf{Mechanised in} \\
\hline\hline
Theorem 1 &  t1 &  \\
 		  & theorem1 & \\
\cline{1-2}	  
Theorem 2 &  t2 & \\
 		  & theorem2 &  \\
\cline{1-2}
&  minSub & \\
Theorem 3 &  align\_Sds\_N & \setlog \\
&  align\_N\_Sds & \\
\cline{1-2}
& lemma1\_N\_spRead & \\
& lemma1\_N\_wkRead  & \\
& lemma1\_N\_rvkRead & \\
Theorem 4 & lemma1\_W\_write & \\
& lemma1\_N\_readWrite & \\
\cline{2-3}
&  Corollary\_1 & Coq \\
&  secure & \\

\hline
\end{tabular}
\vspace{10pt}

\end{table}

\section{Related and Future work: supporting policy makers}\label{sec:future}

There are refinements of Brewer-Nash and related security policy models that can be analysed, some already mentioned in the introduction.
This work can be seen as laying down a methodology that can be used to automate the analysis of such
security policy models.
Indeed, elements similar to our more explicit approach appear in Kessler~\cite{Kessler1992}, who treats access grants and operations separately.
A priority would be to adapt the model in this work to the conflict-of-interest relation of Lin~\cite{Lin1989}, mentioned in the introduction,
replacing the more restrictive Bell-LaPaudula-inspired CoIC labels of 
the Brewer-Nash that we have respected in this work.

In related work, the Bell-LaPadula policy model has already been verified in \setlog~\cite{DBLP:journals/jar/CristiaR21}. In that work, VCs were manually generated. The presence of the VCG in the current paper not only automates a nontrivial task but, mainly, increases confidence in the correctness of the VCs and the model, by reducing the possibility of human error in the toolchain.
There is also related work on using \setlog for verifying properties of the Android Permission System~\cite{Android} automating much of a 23KLOC Coq proof,
which is evidence that the methodology employed can scale.

As policy models can become complex we argue that the ``policy maker'' can benefit with efficient automated tools for analysing design decisions.
Consider for example Fig.~\ref{fig:eg:advanced},
where the right-hand side
can only be expressed using the more general conflict-of-interest relation due to Lin~\cite{Lin1989}.
Suppose a new policy model (not Brewer-Nash) permits a subject $\sone$ to write to $\oone$
while retaining read access to another dataset.
In this case, (1) a CoI relation itself is updated such that the dataset of $o_1$ absorbs the CoI of the dataset of $\otwo$,
and,
(2) since that would create violation of the simple security property, read access to $\oone$ is revoked entirely for $\stwo$.
That is, access for one subject 
is revoked due to actions of another subject and furthermore the whole system becomes more restrictive, leading to conflict resolution questions. 
The analysis of information flow properties
becomes trickier when read can be revoked since Lemma~\ref{lemma:setlog-star},
which assumes that read access monotonically increases, would be violated.

\begin{figure}[h!]
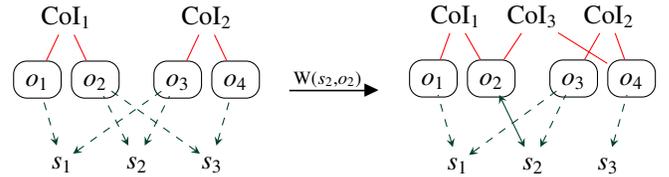

\[
\!\!\!\!\!\!
\begin{array}{c}
  \begin{array}{c@{\;}c}    
    \vczero0 & \vcone1
    \\[1em]
  \begin{array}{c@{~}c}
  \vmod0{ ~\vozero0~  } &
  \vmod1{  ~\voone1~ }
  \end{array}
  &
  \begin{array}{c@{~}c}
  \vmod2{ ~\votwo2~  }
  &
  \vmod3{  ~\vothree3~ }
  \end{array}
  \end{array}
  \\[3em]
  \begin{array}{c@{\qquad}c@{\qquad}c}
  \vszero0
  &
  \vsone1
  &
  \vstwo2 
  \end{array}
\end{array}
  \Nedges{szero0/ozero0,szero0/otwo2,sone1/oone1,sone1/otwo2,stwo2/othree3,stwo2/oone1}
  \Redges{czero0/mod0,czero0/mod1,cone1/mod2,cone1/mod3}
\!\!\!\!
\lts{\Write{\sone}{\oone}}
\!\!\!\!
\begin{array}{c}
  \begin{array}{c@{\quad}c@{\quad}c}    
    \vczero0 & 
    \vctwo2 &
    \vcone1
  \end{array}
    \\[1em]
  \begin{array}{c@{\;}c}    
  \begin{array}{c@{~}c}
  \vmod0{ ~\vozero0~  } &
  \vmod1{  ~\voone1~ }
  \end{array}
  &
  \begin{array}{c@{~}c}
  \vmod2{ ~\votwo2~  }
  &
  \vmod3{  ~\vothree3~ }
  \end{array}
  \end{array}
  \\[2em]
  \begin{array}{c@{\qquad}c@{\qquad}c}
  \vszero0
  &
  \vsone1
  &
  \vstwo2 
  \end{array}
\end{array}
  \Nedges{szero0/ozero0,szero0/otwo2,sone1/otwo2}
  \Nedges{stwo2/othree3}
  \Wedges{sone1/oone1}
  \Redges{czero0/mod0,czero0/mod1,cone1/mod2,cone1/mod3,ctwo2/mod1,ctwo2/mod3}
\]
\caption{
Can a policy allow $\sone$ to insist on write access to $\oone$?
Does this result in an updated conflict-of-interest relation
and also read access of $\stwo$ to $\oone$ being revoked?
}\label{fig:eg:advanced}
\end{figure}

In this work, we have extensively made use of \setlog to automatically verify
properties and theorems. The same tool allows policy makers to simulate the
behaviour of their policy model
before attempting any serious proofs without any extra effort.
In effect,
the tool can be used to retrieve the post-state after having executed one of the
specified operations (e.g. $\rRead$ or $\readWrite$) given a
particular pre-state. The replication package~\cite{rep-package}
includes guidelines to simulate some of the scenarios and examples included in
the paper. The same tool's feature may also be used to discover the
pre-state of an operation given the post-state (reverse simulation) or
verify if a property holds in a particular given state. 

Consider for example the pre-state shown in
Fig.~\ref{fig:eg:advanced}. If we want to verify whether $\starProp$ holds or
not, we can ask \setlog to solve the following.
\begin{align*}
& \N = \{(s_1,(o_1,(c_1,d_1),(s_1,(o_3,(c_2,d_3))),
	(s_2,(o_2,(c_1,d_2))),\\
&\qquad\quad\ (s_2,(o_3,(c_2,d_3))),
	(s_3,(o_2,(c_1,d_2))),(s_3,(o_4,(c_2,d_4)))\} \\
& \land \W = \emptyset \\
& \land \starProp(yo,\N,\W).
\end{align*}
Values of $N$ and $W$ are returned by \setlog,
  meaning $\starProp$ is
satisfiable, otherwise the answer would have been \textsf{no} (unsat).

There is related work on security policies using automated tools other than \setlog.
Some of those papers define a system model,
for example, in terms of temporal constraints in first-order logic~\cite{Brandt2010} or
using Z~\cite{Alam2016}.
The system model is then checked to determine whether it satisfies some formulation of the simple security rule.
In those two papers, the former formulates a variant of the *-property, while the latter appears to omit it entirely.
Our \setlog model could potentially also be combined with a system model
to check whether the system satisfies the simple security and *-property,
 similarly to such papers.
 However, the role of the current paper is complementary to that work,
since we are unaware of prior work mechanising properties of Brewer-Nash such as Theorems~\ref{thm1} to \ref{thm4}.
Variants of the *-property in the literature that we have alluded to 
could benefit from being justified by adapting our model and checking that appropriate variants of such theorems are discharged.


Future work includes verifying the relationship between the 
implicit model at the beginning of Sec.~\ref{sec:motivation},
and the explicit model in Fig.~\ref{fig:explicit-rules}.
Recall that the implicit model did not include the matrix $W$,
and $W$ was key for proving invariants establishing Lemma~\ref{lemma:setlog-star},
in addition to
making Brewer-Nash more implementable.

\begin{figure}[h!]
\begin{gather*}
\begin{prooftree}
\forall o' \colon (s, o') \in N \implies
\left(
  \ds{o'} = \ds{o}
  \vee
  \ds{o'} = \botDS
\right)
\justifies
N \lts{\Write{s}{o}} N 
\using iW
\end{prooftree}
\\[10pt]
\begin{prooftree}
\forall o' \colon (s, o') \in N \implies 
\left(
  \ds{o'} = \ds{o}
  \vee
  \ds{o'} = \botDS
\right)
\justifies
N \lts{\Write{s}{o}} N \cup \left\{ (s, o) \right\}
\using iRW
\end{prooftree}
\\[10pt]
\begin{prooftree}
\forall o' \colon (s, o') \in N \implies \ds{o'} = \ds{o} \vee \coi{o} \neq \coi{o'}
\justifies
N \lts{\Read{s}{o}} N \cup \left\{ (s, o) \right\}
\using iR
\end{prooftree}
\end{gather*}
\caption{Implicit rules for Brewer-Nash policies, including sanitized data.}\label{fig:implicit-rules}
\end{figure}

When enhanced with santized data, as shown in Fig.~\ref{fig:implicit-rules},
the implicit and explicit models align in the sense that any operation in one is possible in the other.
They are even bisimilar, an observation guaranteeing that results concerning information flow are preserved in the implicit model.
We leave these formal comparisons as future work.

\section{Conclusion}

The theme of this agenda is to equip policy makers such that they may make bolder well-informed decisions regarding policies,
 preserving confidentiality constraints on information while potentially increasing access.
How can such policy makers be sure that their policies preserve their intended information flow properties?
More specifically, in this work, we argue that the widespread usage of Chinese Wall policies 
 and high-stake consequences of policy failure,
 mean that we should not rely solely on the definitions and original proofs of Brewer \& Nash
 and we should bring them up to the level of assurance given by modern mechanised tools.
 Indeed, we have mechanised in \setlog invariants formulated in terms of the simple security rule and *-property (Theorem~\ref{thm1}, Lemma~\ref{lemma:setlog-star} \& Sec.~\ref{sec:model});
  and also that the number of subjects in a system cannot be less than the number of datasets accessed in each conflict of interest class (Theorem~\ref{thm3} \& Sec.~\ref{sec:min}).
The steps of Theorem~\ref{thm4} mechanised in Coq are expressed in Theorem~\ref{thm:Ricardo}.
  
 We have deliberately stuck closely to the original Brewer-Nash security policy model in this work.
 This is to remove any doubt that we verify anything other than the core model proposed by Brewer \& Nash,
 and
 also because we believe that, even for that model, the operational semantics of access was left somewhat open to interpretation, as elaborated on in Sections~\ref{sec:motivation} and~\ref{sec:bridge}.
 Indeed, in Section~\ref{sec:motivation} we have pointed out that interpretations are not unique
 (e.g., if write access can be granted without read access, that opens up an initial phase where data can be pushed from subjects to multiple datasets independently of a conflict of interest, before read is granted within a confidential dataset, resulting in write being revoked elsewhere).
Furthermore, the space of existing and future extensions of Brewer-Nash is large, as touched on in Section~\ref{sec:future},
and the systematic exploration of that space is open to creativity, where debates may be further substantiated by adapting the methodology employed in the current paper.


\bibliographystyle{IEEEtran}
\bibliography{references.bib}

\end{document}